\documentclass[
  aps,pre,
  preprint,           % best for review
  %linenumbers,
  nofootinbib,
  superscriptaddress,
  longbibliography
]{revtex4-2}

% ---------- Encoding & fonts ----------
\usepackage[T1]{fontenc}
\usepackage[utf8]{inputenc}
\usepackage{microtype}
\usepackage{bm}
\usepackage{siunitx}
\sisetup{detect-family=true, detect-weight=true, separate-uncertainty=true, per-mode=symbol}

\usepackage{booktabs}
\usepackage{tabularx}
\makeatletter
\@ifpackageloaded{inputenc}{%
  \DeclareUnicodeCharacter{2009}{\,}%
}{}
\makeatother
\usepackage[section]{placeins}

% --- math tools first (loads amsmath), then amsthm, then newtx ---
\usepackage{mathtools}           % loads amsmath
\usepackage{amsthm}
\usepackage{newtxtext}
\usepackage[amsthm]{newtxmath}
\mathtoolsset{showonlyrefs}

% --- short macros to reduce equation width ---
   % effective bit energy
        % 1 / effective bit energy
             % binary entropy

% ---------- Figures & color ----------
\usepackage{graphicx}
\graphicspath{{fig/}{figures/}}
\raggedbottom

% ---------- Boxes & layout ----------
\usepackage[most]{tcolorbox}
\usepackage{enumitem}
\DeclareSIUnit\bit{bit}
\DeclareSIUnit\nat{nat}
\usepackage{empheq}
\usepackage[caption=false]{subfig}

% --- load nameref to enable \nameref{}, then load hyperlinks LAST ---
\usepackage{nameref}
\usepackage[colorlinks=true,
  linkcolor=blue,
  citecolor=blue,
  urlcolor=blue]{hyperref}

% ---------- Theorem-like ----------
\theoremstyle{plain}
\newtheorem{theorem}{Theorem}

\newtheorem{lemma}{Lemma}
\newtheorem{corollary}{Corollary}
\theoremstyle{definition}
\newtheorem{definition}{Definition}
\newtheorem{remark}{Remark}

% ---------- Bibliography ----------
\bibliographystyle{apsrev4-2}

%========================================
%               Front matter
%========================================
\begin{document}

\title{A Single--Index Theory of Optimal Branching: Murray Laws, Gilbert Networks, and Young--Herring Junctions}

\author{Justin Bennett}
\email{jmb15@illinois.edu}
\affiliation{Department of Physics, University of Illinois Urbana--Champaign, Urbana, Illinois 61801, USA}
\thanks{Independent work; not conducted under, nor funded by, the University of Illinois research program.}

\date{\today}

\begin{abstract}
Murray-type flux--radius laws, Gilbert-type concave flux costs, and Young--Herring triple-junction angle balances are usually treated as distinct pieces of theory. This paper shows that, within a natural class of quadratic scale-free ledgers, all three arise from a single underlying structure controlled by one dimensionless index $\chi$. A branched network is represented as a graph whose edges carry a flux $Q$, a radius $r$, and a per-length ledger $\mathcal P(Q,r)$ that aggregates transport dissipation and structural burden. Under locality, evenness in $Q$, linear-response quadratic dependence, and an exact scale-free homogeneity ansatz in $(Q,r)$ on an open scaling cone, any admissible ledger in that regime reduces to a two-term power law in $r$ multiplying $Q^2$ and a purely structural term. For networks that are locally optimal in the radii and stationary under translations of interior nodes, this quadratic scale-free ledger simultaneously implies: (i) a flux--radius law $|Q|\propto r^\alpha$ with generalized Murray closures at degree-3 nodes; (ii) a Young--Herring-type vector balance with weights $r^m$ and a fixed symmetric Y-junction angle; and (iii) an effective flux-only cost $\mathcal P^\ast(Q)\propto |Q|^\beta$ of Gilbert/branched-transport type. The exponents $\alpha$ and $\beta$, the symmetric junction angle, and the relative ledger weights of transport and structure are all determined by a single index $\chi = m/(m+p)$. A rigidity theorem shows conversely that, within the quadratic ledger class and under matching homogeneity assumptions on the branchwise minimizer and minimized flux-only cost, any ledger that produces power-law $r^\ast(Q)$ and $\mathcal P^\ast(Q)$ must belong to this two-term family and hence obey the same Murray--Gilbert--Young dictionary. Examples for Poiseuille, diffusive, and geophysical trees illustrate how $\chi$ can be inferred from geometry and used as a falsifiable order parameter for scale-free branching architectures.
\end{abstract}

\maketitle
\newpage
\tableofcontents

%========================================
\section{Introduction}
\label{sec:intro}
%========================================
Fluxes in nature rarely move through homogeneous bulk; they are channeled through branching networks. Blood is delivered by vascular trees, water drains through river basins, electrical currents are routed through power grids, samples and reagents are distributed through microfluidic chips, and metabolites or gases are conveyed by tissue vasculature. In every case, geometry is constrained: wide conduits carry flux easily but are costly to build and maintain, whereas narrow conduits are cheap but dissipative. The working architecture of such systems reflects a compromise between transport performance and a structural or informational cost.

Across physics, geophysics, and biology, this compromise repeatedly appears in three closely related mathematical structures. Murray-type flux–radius relations tie flux $Q$ to radius $r$ along branches through a power law $|Q|\propto r^\alpha$ and enforce additive closures such as $r_0^\alpha = r_1^\alpha + r_2^\alpha$ at degree-3 junctions, as in classical vascular optimality, plant hydraulics, and microfluidic network design \cite{Murray1926,Murray1926b,ShermanJGP,Uylings1977,ZamirOptimalityTrees,McCullohYarrum,EmersonGeneralisedMurray}. Concave flux costs of branched-transport or optimal-network type,
\[
  \sum_e \ell_e |Q_e|^\beta,
  \qquad 0<\beta<1,
\]
encode the energetic advantage of sharing trunks rather than routing fluxes along independent paths \cite{Gilbert1967,XiaBranchedTransport,BernotCasellesMorel,BouchitteButtazzo,AmbrosioFuscoPallara}. Young–Herring triple-junction balances and Steiner-tree angle laws relate branch radii to opening angles through effective line or surface tensions and yield Snell-like refraction rules for optimal junction geometry \cite{Herring1951,CahnHilliard1958,MorganGMT}.

In much of the existing literature these structures are introduced separately. Murray-type laws arise from balancing viscous dissipation against vessel volume or surface area; concave flux costs emerge from coarse-graining microscopic transport into an effective network functional; Young–Herring conditions are derived from interfacial thermodynamics or geometric optimization of curves and surfaces. Taken together, however, they suggest a more unified picture and motivate a structural question:

\medskip
\emph{Under what minimal assumptions on a local branch ledger are Murray laws, concave flux costs, and Young–Herring node balances different projections of the same underlying object, and how many independent parameters remain once scale-free behavior is imposed?}
\medskip

Previous work introduced an entropy-per-information-cost (EPIC) variational principle for maintained transport networks \cite{BennettEPIC}. There, transport dissipation and structural upkeep were expressed in a common currency (J/bit) and a node-level ledger
\[
  L(r) = P_{\rm flow}(r) + \Lambda B(r)
\]
was analyzed for steady Poiseuille flow at a single Y-junction, with $P_{\rm flow}(r)$ the pumping power, $B(r)$ a structural bit rate scaling as a power of $r$, and $\Lambda>0$ a priced-bit multiplier. Stationarity of this EPIC ledger yielded a generalized Murray scaling $Q\propto r^\alpha$, a Young–Herring-type vector balance fixing bifurcation angles, a universal power partition between flow and structure at optimality, and, after eliminating radii, a strictly concave flux-only cost $\propto |Q|^\gamma$ with a routing index governing Snell-like refraction in heterogeneous price fields. A preregistered test on retinal bifurcations supported this picture. Structurally, a ledger built from two competing powers of $r$ was shown to organize flux–radius laws, node angles, concave flux costs, and routing behavior as a single bundle.

The present work abstracts and completes that structural picture. System-specific thermodynamics, information-theoretic units, and Poiseuille details are replaced by an abstract, dimensionless ledger, and the retinal EPIC model becomes a single example inside a broader class of quadratic, scale-free ledgers. Within this class, Murray-type flux–radius relations, Gilbert-type concave flux costs, and Young–Herring junction balances turn out to be different projections of a single underlying structure controlled by one dimensionless index.

The starting point is a branched network modeled as a graph whose edges $e$ carry a scalar flux $Q_e$, an effective radius $r_e$, and a per-length ledger $\mathcal P(Q_e,r_e)$ interpreted as a local ``price density'' combining transport dissipation with a structural or informational burden. In a putative scale-free regime, the ledger is required to satisfy four structural conditions, made precise in Sec.~\ref{sec:branch-model}: locality and additivity along branches (so that a homogeneous edge of length $\ell$ contributes $\ell\,\mathcal P(Q,r)$); evenness in flux, $\mathcal P(-Q,r)=\mathcal P(Q,r)$; a quadratic dependence on $Q$ at fixed $r$, reflecting linear-response entropy production and allowing a decomposition $\mathcal P(Q,r)=A(r)Q^2+B(r)$ with $A,B>0$; and an exact scale-free homogeneity ansatz, namely that a joint rescaling $(Q,r)\mapsto(\lambda^u Q,\lambda r)$ rescales the ledger as
\[
  \mathcal P(\lambda^u Q,\lambda r)=\lambda^v \mathcal P(Q,r)
\]
on an open scaling cone. The last condition is decisive: it is not a generic consequence of asymptotic power-law behavior, but a strong requirement that $\mathcal P$ itself be homogeneous of degree $v$ in $(Q,r)$ on a scale-free window. Under this hypothesis, an Euler-type argument applied to $A$ and $B$ shows that any such ledger reduces, on that cone, to a two-term power law
\begin{equation}
  \mathcal P(Q,r)
  =
  a\,\frac{Q^2}{r^{p}}
  +
  b\,r^{m},
  \qquad a,b>0,\; p>0,\; m>0,
  \label{eq:intro-two-term}
\end{equation}
used throughout the paper. This two-term form is best viewed as a homogeneous normal form for an intermediate asymptotic regime in the sense of Barenblatt \cite{BarenblattScaling}: outside that window, additional scales, logarithmic corrections, curvature, anisotropy, or nonlocal effects may generate deviations from the pure power laws derived below.

Once the ledger takes the form \eqref{eq:intro-two-term}, local optimality of the associated network functional
\[
  \mathcal F[\{Q_e,r_e\}] = \sum_e \ell_e\,\mathcal P(Q_e,r_e)
\]
produces three families of conditions aligned with the classical structures above. Minimizing $\mathcal P(Q,r)$ with respect to $r$ at fixed $Q$ yields a unique branchwise optimum $r^\ast(Q)$ with $|Q|\propto r^\alpha$ and exponent $\alpha=(m+p)/2$, and flux conservation at degree-3 junctions with one parent and two daughters enforces generalized Murray closures $r_0^\alpha = r_1^\alpha + r_2^\alpha$ (Sec.~\ref{sec:murray}). Stationarity of $\mathcal F$ under infinitesimal translations of interior nodes, with fluxes and radii held at their branchwise optima, leads to a Young–Herring triple-junction balance
\[
  \sum_{e\in E(v)} r_e^m\,\hat{\bm t}_e = \bm 0,
\]
with unit tangents $\hat{\bm t}_e$ at the node and effective ``skeleton tensions'' $\propto r^m$; for symmetric Y-junctions this yields an angle law
\[
  \cos\frac{\theta}{2} = 2^{(m-p)/(m+p)}
\]
(Sec.~\ref{sec:angles}). Eliminating $r$ using the branchwise optimum produces an effective flux-only ledger $\mathcal P^\ast(Q) \propto |Q|^\beta$ with concavity exponent $\beta = 2m/(m+p)$, so that
\[
  \mathcal F^\ast[\{Q_e\}]
  \propto
  \sum_e \ell_e |Q_e|^\beta,
\]
and the node balance can equivalently be written as $\sum_{e\in E(v)} |Q_e|^\beta\,\hat{\bm t}_e = \bm 0$ (Sec.~\ref{sec:concavity}). For $0<m<p$ one has $0<\beta<1$, and the effective flux-only cost coincides with the classical Gilbert/branched-transport functional \cite{Gilbert1967,BernotCasellesMorel,XiaBranchedTransport}. Thus the classical Murray law, Young–Herring balance and angle law, and concave flux-only functional all arise as consequences of the same two-term ledger \eqref{eq:intro-two-term}.

Within this ledger class, the a priori independent quantities
\[
  \alpha \quad \text{(flux–radius exponent)},\qquad
  \beta \quad \text{(concavity exponent)},\qquad
  \theta \quad \text{(symmetric Y-junction angle)}
\]
are not free. All are controlled by a single dimensionless index
\begin{equation}
  \chi := \frac{m}{m+p}\in(0,1),
\end{equation}
introduced in Sec.~\ref{sec:chi-dictionary}. Eliminating $(m,p)$ in favour of $(\alpha,\beta,\theta)$ gives the “Snell–Murray dictionary”
\begin{equation}
  \chi
  =
  1 - \frac{p}{2\alpha}
  =
  \frac{\beta}{2}
  =
  \frac{1}{2}\Big(1 + \log_2\cos\tfrac{\theta}{2}\Big),
  \label{eq:intro-dictionary}
\end{equation}
together with equivalent relations such as $\cos(\theta/2)=2^{\beta-1}$ and $\alpha=p/(2-\beta)$. Once the transport exponent $p$ is set by the underlying linear transport law, the triple $(\alpha,\beta,\theta)$ lies on a one-parameter curve labeled by $\chi$, and any two of $\{\alpha,\beta,\theta\}$ determine the third. In the EPIC thermodynamic interpretation \cite{BennettEPIC}, the same index $\chi$ also equals the fraction of the local ledger spent on transport rather than structure at the branchwise optimum; this ledger-sector meaning is revisited in Sec.~\ref{sec:chi-dictionary}.

The direction of implication can also be reversed. Starting from a general quadratic ledger
\[
  \mathcal P(Q,r) = A(r)Q^2 + B(r), \qquad A(r),B(r)>0,
\]
consider a regime in which the branchwise minimizer $r^\ast(Q)$ is a pure power of $Q$ and the minimized flux-only ledger $\mathcal P^\ast(Q)$ is a pure power of $|Q|$. Under these homogeneity assumptions alone, the analysis in Sec.~\ref{sec:rigidity} shows that $A(r)$ and $B(r)$ must be monomials $A(r)\propto r^{-p}$ and $B(r)\propto r^m$, with exponents related by the dictionary \eqref{eq:intro-dictionary}. Once this is established, the radius-weighted Young–Herring balance with weights $r^m$ and its flux-only counterpart follow from the forward derivation in Secs.~\ref{sec:murray}–\ref{sec:concavity}; they then become consequences of the rigidity result rather than additional hypotheses. Within the quadratic ledger class, subject to the exact homogeneity hypotheses stated in Theorem~\ref{prop:rigidity-quadratic}, there is therefore no alternative way to obtain simultaneous power-law flux–radius scaling and power-law flux-only concavity (and hence the associated Young–Herring node balances) other than to be in the two-term ledger class \eqref{eq:intro-two-term}. In this precise sense, Murray-type laws, Gilbert concavity, and Young–Herring node balances are different faces of a single underlying structure once the ledger is constrained by these homogeneity assumptions.

The main results can be summarized as follows:
\begin{itemize}
  \item \textbf{Two-term ledger from homogeneity.} Under locality, evenness in $Q$, quadratic linear-response structure, and the scale-free homogeneity ansatz of Definition~\ref{def:admissible-ledger}, the ledger on any exactly scale-free cone can be written, by an Euler-type argument, in the two-term power-law form \eqref{eq:intro-two-term} (Sec.~\ref{sec:branch-model}).

  \item \textbf{Unification and dictionary.} For such ledgers, local optimality forces Murray-type flux–radius laws, Young–Herring junction balances and angle laws, and a concave Gilbert-type flux functional whose exponents and angles are tied together by a single index $\chi$ via the dictionary \eqref{eq:intro-dictionary} (Secs.~\ref{sec:murray}–\ref{sec:chi-dictionary}).

  \item \textbf{Rigidity.} Conversely, within the quadratic ledger class, any ledger for which the branchwise minimizer $r^\ast(Q)$ and the minimized flux-only cost $\mathcal P^\ast(Q)$ are exact power laws of $Q$ on a scale-free cone must belong to the two-term class \eqref{eq:intro-two-term}, with exponents constrained by the same dictionary (Sec.~\ref{sec:rigidity}); once this holds, the power-law Young–Herring weights follow automatically.
\end{itemize}

The intention is not to claim that all branched systems obey these axioms at all scales, but rather to identify a natural quadratic, scale-free class and to describe the geometric and energetic consequences that follow whenever a network flows into this class in an intermediate asymptotic regime.

The remainder of the paper is organized as follows. Section~\ref{sec:background} reviews several families of branched systems and explains how the ledger axioms arise in each. Section~\ref{sec:branch-model} introduces the branch-network model and the admissible ledger class, and proves the homogeneity lemma that reduces admissible ledgers to the two-term form \eqref{eq:intro-two-term}. Sections~\ref{sec:murray}–\ref{sec:concavity} derive the flux–radius law, the generalized Murray closure, the Young–Herring node balance, symmetric Y-junction angles, and the effective flux-only concavity exponent. Section~\ref{sec:chi-dictionary} introduces the index $\chi$, establishes the Snell–Murray dictionary linking $(\alpha,\beta,\theta)$, and shows that $\chi$ measures the ledger-sector split between transport and structure. Section~\ref{sec:rigidity} states and proves the rigidity theorem for the general quadratic ledger class. Section~\ref{sec:applications} illustrates the framework on Poiseuille-type trees, diffusive and Ohmic conduction networks, and coarse geophysical models. Section~\ref{sec:discussion} relates these results back to the original EPIC formulation and discusses how $\chi$ can serve as a falsifiable order parameter for branched networks in experiments and simulations.

%========================================
\section{Motivating branched systems}
\label{sec:background}
%========================================
Before turning to the abstract ledger of Sec.~\ref{sec:branch-model}, it is useful to recall several classes of systems in which Murray-type relations, concave transport costs, and junction-angle rules have all appeared, but typically as separate pieces of theory. The aim here is not an exhaustive review, but a compact justification for two modeling ingredients: a quadratic dependence on flux $Q$ (reflecting linear-response entropy production) and power-law dependence on radius $r$ (reflecting geometric and upkeep scaling within an approximately scale-free window).

%----------------------------------------------------------
\subsection{Microfluidic and Poiseuille-type distribution trees}
\label{subsec:microfluidics}
%----------------------------------------------------------
Lab-on-a-chip devices, microreactors, and compact heat exchangers often rely on branching microfluidic trees to distribute or collect fluid between a small number of inlets and many outlets \cite{SquiresQuakeReview,BruusMicrofluidics,KirbyMicroNano,KnightMicrofluidicPolymer}. In many designs the flow is steady, laminar, and well described by the Hagen--Poiseuille law in channels of approximately circular or rectangular cross-section. For a straight cylindrical segment of radius $r$ and length $\ell$, the volumetric flux obeys
\[
  Q = \frac{\pi r^{4}}{8\mu \ell}\,\Delta P,
\]
so that, at fixed $Q$, the viscous dissipation power per unit length scales as $Q^{2} r^{-4}$. This identifies the transport exponent $p=4$ in the two-term ledger and exemplifies the generic quadratic-in-$Q$ structure associated with linear irreversible thermodynamics; Sec.~\ref{subsec:EPIC-principle} shows how this $Q^{2}/r^{p}$ form arises more generally from linear transport and slender-branch geometry.

Structural costs in such devices are naturally tied to channel size. Fabrication, footprint, and fouling penalties typically grow with cross-sectional measures such as perimeter ($\propto r$) or area ($\propto r^{2}$), depending on whether wall area or occupied volume dominates the burden. To leading order, this is captured by a power law $r^{m}$ with $m=1$ or $m=2$. Over several decades of scale, many microfluidic trees and Poiseuille-type vascular networks exhibit approximately constant exponents in flux--radius relations and reuse similar branching motifs, indicating a practical scale-free window. Within that window, a ledger of the form
\[
  \mathcal P(Q,r) \sim a\,\frac{Q^2}{r^{4}} + b\,r^{m}
\]
is therefore a natural coarse-grained summary of transport dissipation and structural upkeep.

%----------------------------------------------------------
\subsection{Geophysical transport networks}
\label{subsec:geophysical}
%----------------------------------------------------------
River basins, delta networks, and subsurface drainage systems provide a second class of branched transport structures \cite{RodriguezIturbeRinaldo,RinaldoMinimumEnergy}. At catchment scales, empirical geomorphological laws relate contributing area, channel width, and mean discharge; at finer scales, the cascade of tributaries feeding larger streams qualitatively resembles an optimal-transport tree. Continuum and graph-based models often represent the energetic cost per unit length of a channel as the sum of a transport term and a structural term. In regimes where an effective linear relation between driving potential and flux holds, hydraulic losses scale as $Q^{2}$ divided by an effective cross-sectional measure, leading again to a ledger contribution $\propto Q^{2}/r^{p}$ with an exponent $p$ determined by the channel codimension and roughness. Strongly turbulent regimes may modify the effective exponent, but over intermediate ranges of discharge and scale, a quadratic approximation is commonly employed.

Structural costs in geophysical channels arise from excavation, bank stabilization, and sediment transport capacity, and typically scale with cross-sectional area, wetted perimeter, or erodible volume \cite{BanavarEfficientNetworks,BejanConstructalReview}. These contributions are well modeled, at leading order, by a power law $r^{m}$ in an effective radius. Concave flux costs of the form $\sum_{e}\ell_{e}|Q_{e}|^{\beta}$ with $0<\beta<1$ are introduced phenomenologically to capture the observed consolidation of flow into a few large rivers rather than many small ones \cite{BanavarEfficientNetworks,BejanConstructalReview,BejanConstructalPattern,BohnMagnascoPRL,KatiforiPRL,CorsonPRL}. The EPIC ledger \eqref{eq:intro-two-term} provides a microscopic interpretation of such concavity within a quadratic, scale-free regime: the concavity exponent $\beta$ emerges from $(m,p)$, and the same exponents control flux--width relations and junction-angle statistics when a scale-free window is present.

%----------------------------------------------------------
\subsection{Biological tissue networks}
\label{subsec:tissues}
%----------------------------------------------------------
Biological transport trees range from micron-scale capillary beds and leaf venation to centimetre-scale pulmonary airways and coronary arteries. Many such systems have long been analyzed with Murray-type optimality arguments. In the original vascular setting, viscous pumping power $\propto Q^{2}/r^{4}$ was balanced against a volume-based maintenance cost $\propto r^{2}$, yielding the cubic relation $r_{0}^{3}=r_{1}^{3}+r_{2}^{3}$ for blood vessels \cite{Murray1926,Murray1926b}. Subsequent work extended this framework to surface-based upkeep ($m=1$), non-Newtonian rheology, and whole-organ constraints, and applied it to xylem, phloem, and respiratory trees, including systematic treatments of optimal diameters and branching angles \cite{ShermanJGP,Uylings1977,ZamirOptimalityTrees,EmersonGeneralisedMurray,PriesSecombReview,WestBrownEnquistScaling}. Parallel literatures introduced concave flux costs and angle-selection principles at the organ or organism level, often motivated by efficient network architecture or uniform stress arguments.

In these tissue trees, laminar Newtonian hydraulics again identifies a transport term $\propto Q^{2}/r^{4}$, while upkeep, metabolic maintenance, or material constraints contribute geometric powers of $r$. Radii and fluxes typically span several decades, and statistical exponents for Murray-type laws and angle relations remain approximately constant across scales, indicating that many tissues operate within an approximate scale-free regime. In such a regime, an abstract ledger of the form $aQ^{2}/r^{p}+b r^{m}$ with $p=4$ and $m$ reflecting volume- or surface-based pricing captures the leading competition between transport dissipation and structural burden.

\medskip

These three families of examples point to the two modeling choices formalized in Sec.~\ref{sec:branch-model}. First, the quadratic dependence on $Q$ can be treated as a generic consequence of linear, local transport plus slender-branch geometry; Sec.~\ref{subsec:EPIC-principle} derives the $Q^{2}/r^{p}$ scaling and identifies $p$ with an effective cross-sectional exponent, independent of the specific medium. Second, the two-term ledger \eqref{eq:intro-two-term} is best viewed as the leading homogeneous approximation to a more complicated microscopic ledger in an intermediate, approximately self-similar window. The rest of the paper simply works out the geometric and energetic consequences that follow whenever a branched system spends part of its scale range in such a quadratic, scale-free regime.

%========================================
\section{Homogeneity I: branch-network model and axiomatic ledgers}
\label{sec:branch-model}
%========================================

The discussion now turns from physical examples to an abstract model of branched networks and to the precise assumptions placed on the local ledger. The description is deliberately general, and all subsequent rigidity statements are \emph{conditional} on these axioms and on working, for part of the analysis, in an idealised exactly scale-free window, interpreted as a leading-order approximation to an intermediate asymptotic regime.

%----------------------------------------------------------
\subsection{Graph model and flux conservation}
\label{subsec:graph-model}
%----------------------------------------------------------

A branched network is represented as a finite graph $G=(V,E)$ smoothly embedded in $\mathbb R^{d}$ with $d\geq 2$. The formalism allows general graphs with loops, but in the Murray-law and angle derivations of Secs.~\ref{sec:murray}--\ref{sec:concavity} the analysis is restricted to locally tree-like regions without cycles. Each edge $e\in E$ is modeled as a slender branch characterized by
\begin{itemize}
  \item an arclength $\ell_e>0$,
  \item an effective radius (or half-width) $r_e>0$,
  \item a scalar flux $Q_e\in\mathbb R$.
\end{itemize}
The ledger depends only on the magnitude $|Q_e|$; the sign of $Q_e$ is encoded by assigning an arbitrary orientation to each edge. Throughout, $Q_e\in\mathbb R$ denotes an \emph{oriented} flux and $|Q_e|$ its magnitude. When scalar relations such as $Q_0 = Q_1 + Q_2$ are written at a degree-3 ``Murray node'', the local edge orientations are always chosen so that $Q_i>0$ at that node, and the $Q_i$ can be read as flux magnitudes there.

At interior nodes the network is assumed to conserve flux. Let $v\in V$ denote an interior node and $E(v)$ the set of edges incident on $v$. For each $e\in E(v)$ let $\sigma_{ve}=\pm 1$ indicate whether the chosen orientation of $e$ is outgoing or incoming at $v$. Flux conservation then reads
\begin{equation}
  \sum_{e\ni v} \sigma_{ve} Q_e = 0
  \quad\text{for each interior node } v\in V.
\end{equation}
Nodes at which net flux is prescribed play the role of sources and sinks and are not varied in the optimization problems considered later.

Throughout, a dimensionless parametrization of the ledger is used, but it is useful to keep the underlying dimensionful quantities in view. In a given physical system the per-length ledger has the schematic form
\[
  \mathcal P_{\rm phys}(Q_{\rm phys},r_{\rm phys})
  =
  a_0\,\frac{Q_{\rm phys}^2}{r_{\rm phys}^{p}}
  +
  b_0\,r_{\rm phys}^{m},
  \qquad a_0,b_0>0,
\]
where $Q_{\rm phys}$ and $r_{\rm phys}$ carry their usual units (for example m$^3$/s and m for Poiseuille flow), $p$ and $m$ are dimensionless exponents, and $a_0,b_0$ absorb material parameters and any entropy- or information-pricing factors (such as $T_{\rm env}$ or an energy-per-bit scale). Choosing reference values $Q_{\rm ref},r_{\rm ref}>0$ and a reference ledger scale $P_{\rm ref}>0$, define
\[
  Q := \frac{Q_{\rm phys}}{Q_{\rm ref}},
  \qquad
  r := \frac{r_{\rm phys}}{r_{\rm ref}},
  \qquad
  \mathcal P(Q,r) := \frac{\mathcal P_{\rm phys}(Q_{\rm phys},r_{\rm phys})}{P_{\rm ref}}.
\]
With the choice $P_{\rm ref}=a_0 Q_{\rm ref}^2/r_{\rm ref}^{p}$ the dimensionless ledger takes the two-term form
\[
  \mathcal P(Q,r)
  =
  a\,\frac{Q^2}{r^{p}}
  +
  b\,r^{m},
\]
with $a,b>0$ dimensionless combinations of $(a_0,b_0,Q_{\rm ref},r_{\rm ref})$. All scale-free relations derived below depend only on the exponents $(m,p)$, or equivalently on the index $\chi=m/(m+p)$, and are insensitive to the particular choice of reference scales or to the overall multiplicative normalization of $\mathcal P$. From this point on the subscript ``phys'' is suppressed and $Q$, $r$, and $\mathcal P$ denote these dimensionless variables. In the unit-bearing EPIC interpretation, the logical order is that a \emph{physical} ledger is first fixed, in which the transport and structural contributions have already been expressed in a common unit (for example J/bit per unit length), and only then is the passage made to the dimensionless ledger and to the parameter~$\chi$. Whenever $\chi$ is interpreted as ``the fraction of the ledger spent on transport,'' this is always with respect to such a prior choice of common units.

%----------------------------------------------------------
\subsection{Admissible ledgers and scale-free characterization}
\label{subsec:admissible-ledger}
%----------------------------------------------------------
The central modeling ingredient is the local ledger $\mathcal P(Q,r)$ assigned per unit arclength of branch. This subsection formalizes the class of ledgers under consideration and records the consequences of the scale-free homogeneity assumption. In particular, it shows that, in any regime where a strong homogeneity relation holds exactly on an open cone in $(Q,r)$, a quadratic (linear-response) ledger can, on that cone, be parametrized by two exponents $(m,p)$ appearing as simple monomials of $r$.

\begin{definition}[Admissible branch ledger]
\label{def:admissible-ledger}
A per-length branch ledger $\mathcal P(Q,r)$ is called \emph{admissible} if it satisfies:
\begin{enumerate}[label=(\alph*),nosep]
  \item \textbf{Locality and additivity along branches:} the total cost of a branch with arclength $\ell$ is
        \[
          \mathcal C_{\text{branch}} = \int_0^\ell \mathcal P(Q(s),r(s))\,ds.
        \]
        For a homogeneous branch with constant $Q$ and $r$ one has $\mathcal C_{\text{branch}} = \ell\,\mathcal P(Q,r)$. When the ledger later takes the two-term form~\eqref{eq:ledger-def} and the flux is constant along the branch, $Q(s)\equiv Q$, any local minimizer of $\mathcal C_{\text{branch}}$ with respect to radius profiles $r(s)>0$ at fixed $\ell$ and $Q$ has a constant radius $r(s)\equiv r^\ast(Q)$: the integrand depends on $s$ only through $r(s)$ and not on derivatives $r'(s)$, so the Euler--Lagrange equation reduces pointwise to $\partial_r\mathcal P(Q,r(s))=0$, and for the two-term ledger the map $r\mapsto\mathcal P_{\rm br}(Q,r)$ has a unique minimizer $r^\ast(Q)$ for each fixed $Q$ (Lemma~\ref{lem:flux-radius}), which forces $r(s)\equiv r^\ast(Q)$ almost everywhere.

  \item \textbf{Evenness in flux:}
        \[
          \mathcal P(-Q,r) = \mathcal P(Q,r)
          \quad\text{for all } Q,r>0,
        \]
        so the ledger depends on $Q$ only through $Q^2$.

  \item \textbf{Quadratic transport sector (linear response):} there
        exist scalar functions $A,B:(0,\infty)\to(0,\infty)$ with $A,B\in C^1((0,\infty))$ such that
        \begin{equation}
          \label{eq:quadratic-decomposition}
          \mathcal P(Q,r)
          =
          A(r)\,Q^2 + B(r),
        \end{equation}
        for all $Q\in\mathbb R$ and $r>0$. This encodes the standard linear-response form for local entropy production at fixed geometry \cite{Onsager1931,DeGrootMazur,KondepudiPrigogine}.

  \item \textbf{Scale-free homogeneity:} there exist exponents
        $u,v\in\mathbb R$ and a nonempty open set $D\subset(0,\infty)\times(0,\infty)$ (the \emph{scale-free regime}) with the following properties:
        \begin{itemize}[nosep]
          \item[(i)] $D$ is \emph{locally scale-invariant}: whenever $(Q,r)\in D$, there exists $\varepsilon>0$ such that $(\lambda^u Q,\lambda r)\in D$ for all $\lambda\in(1-\varepsilon,1+\varepsilon)$;
          \item[(ii)] the projection of $D$ onto the $r$-axis has at least one connected component that is a nonempty open interval in $(0,\infty)$. One such component is fixed and denoted by $I$. For each $r\in I$, the slice $J_r = \{Q>0 : (Q,r)\in D\}$ contains a nontrivial open interval in $Q$.
        \end{itemize}
        On $D$ the ledger is exactly homogeneous of degree $v$ under the joint rescaling $(Q,r)\mapsto(\lambda^u Q,\lambda r)$:
        \begin{equation}
          \label{eq:scale-free-homogeneity}
          \mathcal P(\lambda^u Q,\lambda r)
          =
          \lambda^v\,\mathcal P(Q,r)
        \end{equation}
        whenever $(Q,r)$ and $(\lambda^u Q,\lambda r)$ both lie in $D$.
\end{enumerate}
\end{definition}

Throughout, the term \emph{scale-free regime} refers to such an open cone $D$ together with its chosen projection interval $I$. When the term ``cone of fluxes'' in one dimension is used (for example in Theorem~\ref{prop:rigidity-quadratic}), it refers to an interval of positive flux values that is locally invariant under multiplication by $\lambda$ near $1$ (a scale-invariant open ray).

Assumptions (a)–(c) are natural for any local energetic accounting in a one-dimensional branch and encode linear irreversible thermodynamics at fixed geometry. Assumption (d) is a strong homogeneity ansatz: there is presumed to exist a window of branch scales and fluxes, far from microscopic cutoffs and from global system size, in which the coarse-grained ledger behaves as a homogeneous function of $(Q,r)$ in the sense of~\eqref{eq:scale-free-homogeneity}, in line with Barenblatt’s theory of intermediate asymptotics \cite{BarenblattScaling}. Outside that regime no homogeneity is assumed.

%----------------------------------------------------------
\subsection{Assumptions}
\label{subsec:assumptions}
%----------------------------------------------------------
For later use it is convenient to collect the working hypotheses in a concise list. Throughout the scale-free analysis the following assumptions are imposed:
\begin{enumerate}[label=(\roman*),nosep]
  \item \emph{Quadratic structure and evenness in $Q$.} At fixed geometry the ledger is even and quadratic in $Q$,
        \[
          \mathcal P(Q,r) = A(r)\,Q^2 + B(r),
        \]
        with $A,B:(0,\infty)\to(0,\infty)$ of class $C^1$, as in Definition~\ref{def:admissible-ledger}.

  \item \emph{Locality along edges.} Edges are modeled as slender branches that contribute additively via per-length densities: a homogeneous edge of length $\ell$ with constant $(Q,r)$ carries cost $\mathcal C_{\text{branch}} = \ell\,\mathcal P(Q,r)$.

  \item \emph{Scale-free homogeneity on an open cone.} There exists a nonempty open set $D\subset(0,\infty)\times(0,\infty)$ (the scale-free regime) with the properties stated in Definition~\ref{def:admissible-ledger}(d), such that on $D$
        \[
          \mathcal P(\lambda^u Q,\lambda r)=\lambda^v \mathcal P(Q,r)
        \]
        whenever $(Q,r)$ and $(\lambda^u Q,\lambda r)$ both lie in $D$. This idealizes an intermediate asymptotic regime in the sense of Barenblatt \cite{BarenblattScaling}.

  \item \emph{Locally tree-like geometry without cancellation nodes.} The derivations of Murray-type closures, Young--Herring balances, and flux-only concavity are carried out on locally tree-like parts of the network without cycles. At degree-3 nodes attention is restricted to single-parent / two-daughter geometries with consistent flux orientation (no cancellation nodes).

  \item \emph{Piecewise straight edges for node translations.} In the node-translation variation each edge is treated as piecewise straight between nodes. Curvature-dependent corrections to length variations are higher order in the node displacement and do not affect the first-order Young--Herring-type balance.
\end{enumerate}

Items (i)–(iii) restate the ledger axioms in compact form; (iv)–(v) specify the geometric setting for the variational calculations. The next lemma is the technical core of the homogeneity analysis: under the strong assumption that $\mathcal P$ is exactly homogeneous on an open cone in $(Q,r)$, the ledger must, on that cone, reduce to a sum of two monomials in $r$.

\begin{lemma}[Euler homogeneity for admissible ledgers]
\label{lem:scale-free}
Let $\mathcal P(Q,r)$ be an admissible ledger in the sense of Definition~\ref{def:admissible-ledger}, with scale-free regime $D$ and associated connected projection interval $I$ on the $r$-axis as in item~(d). Then there exist constants $a,b>0$ and exponents $p,m\in\mathbb R$ such that, for all $(Q,r)\in D$ with $r\in I$,
\begin{equation}
  \mathcal P(Q,r)
  =
  a\,Q^2 r^{-p}
  +
  b\,r^{m}.
\end{equation}
Equivalently, within the scale-free regime and locally on this connected component $I$ of the $r$-axis, $\mathcal P$ can be written, up to relabeling of exponents, in the two-term power-law form used throughout this paper.
\end{lemma}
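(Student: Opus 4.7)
The plan is to reduce the bivariate homogeneity identity in Definition~\ref{def:admissible-ledger}(d) to two decoupled one-variable functional equations by exploiting the quadratic structure from axiom~(c), and then to integrate each using $C^{1}$ regularity to obtain pure power laws in $r$.

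First, I would substitute $\mathcal P(Q,r)=A(r)Q^{2}+B(r)$ into \eqref{eq:scale-free-homogeneity}, giving the identity $\lambda^{2u}A(\lambda r)Q^{2}+B(\lambda r)=\lambda^{v}A(r)Q^{2}+\lambda^{v}B(r)$ for every $(Q,r)\in D$ and every $\lambda$ in the (possibly $(Q,r)$-dependent) neighborhood of $1$ on which $(\lambda^{u}Q,\lambda r)$ also lies in $D$. For a fixed $r\in I$, Definition~\ref{def:admissible-ledger}(d)(ii) guarantees that the slice $J_{r}$ contains an open $Q$-interval; on that interval both sides are polynomials in $Q$ of degree at most two agreeing on a set with an accumulation point, so matching coefficients of $Q^{2}$ and of $Q^{0}$ separately yields the two scalar identities $A(\lambda r)=\lambda^{v-2u}A(r)$ and $B(\lambda r)=\lambda^{v}B(r)$ for all $\lambda$ in a neighborhood of $1$.

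Next, I would promote these local multiplicative relations to pure power laws on $I$. Because $A,B\in C^{1}((0,\infty))$, differentiating each identity in $\lambda$ at $\lambda=1$ produces the linear ODEs $rA'(r)=(v-2u)A(r)$ and $rB'(r)=vB(r)$, valid on the connected open interval $I$. Integrating these gives $A(r)=a\,r^{v-2u}$ and $B(r)=b\,r^{v}$, with constants fixed by any base point in $I$; positivity of $A$ and $B$ forces $a,b>0$. Setting $p:=2u-v$ and $m:=v$ then gives the claimed two-term normal form $\mathcal P(Q,r)=a\,Q^{2}r^{-p}+b\,r^{m}$ on $\{(Q,r)\in D:r\in I\}$.

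The main obstacle is bookkeeping rather than analysis: the scale-free identity only applies where both $(Q,r)$ and $(\lambda^{u}Q,\lambda r)$ belong to $D$, the slice $J_{r}$ need not be all of $(0,\infty)$, and the admissible range of $\lambda$ around $1$ may shrink with $(Q,r)$. The resolution is that coefficient-matching for a polynomial in $Q$ requires only an open $Q$-interval, while differentiation in $\lambda$ at $\lambda=1$ requires only an open interval of admissible $\lambda$ around $1$; once the decoupled ODEs for $A$ and $B$ are in hand, they depend only on $r\in I$ and propagate the power-law form across the entire connected component, independently of the precise geometry of $D$ above $I$. A minor point worth flagging is that the exponents $(m,p)$ recovered this way are determined by $(u,v)$ up to the trivial relabeling inherent in the one-parameter family of rescalings, which is consistent with the ``up to relabeling of exponents'' caveat in the statement.
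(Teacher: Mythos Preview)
Your proposal is correct and follows essentially the same route as the paper's own proof: substitute the quadratic decomposition into the homogeneity relation, match $Q^{2}$ and $Q^{0}$ coefficients on an open $Q$-slice to decouple into the two scalar identities for $A$ and $B$, differentiate in $\lambda$ at $\lambda=1$ to obtain the Euler ODEs $rA'(r)=(v-2u)A(r)$ and $rB'(r)=vB(r)$ on the connected interval $I$, and integrate to power laws with the substitution $m=v$, $p=2u-v$. Your closing paragraph on the bookkeeping of the $(Q,r)$-dependent $\lambda$-neighborhood is accurate and matches how the paper handles the local scale-invariance assumption.
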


\begin{proof}
By assumption~(i), $\mathcal P(Q,r)=A(r)Q^2+B(r)$ with $A,B\in C^1((0,\infty))$ and $A,B>0$. Inserting this decomposition into the homogeneity condition
\eqref{eq:scale-free-homogeneity} yields, for all $(Q,r)\in D$ and all $\lambda>0$ such that $(\lambda^u Q,\lambda r)\in D$,
\begin{equation}
  A(\lambda r)\,\lambda^{2u} Q^2 + B(\lambda r)
  =
  \lambda^v \big[A(r)Q^2+B(r)\big].
\end{equation}
Fix $r$ in the projection interval $I$ from Definition~\ref{def:admissible-ledger}(d). By assumption, the slice
$J_r=\{Q>0:(Q,r)\in D\}$ contains a nontrivial open interval in $Q$, so the identity above holds for all $Q$ in such an interval. Since both sides are quadratic polynomials in $Q$, their coefficients must agree, which gives, for all $\lambda$ with $(\lambda^u Q,\lambda r)\in D$,
\begin{align}
  A(\lambda r)\,\lambda^{2u} &= \lambda^v A(r), \label{eq:A-scaling}\\
  B(\lambda r)              &= \lambda^v B(r). \label{eq:B-scaling}
\end{align}
Because $D$ is locally scale-invariant in the sense of Definition~\ref{def:admissible-ledger}(d)(i), for each fixed $(Q,r)\in D$ there exists $\varepsilon>0$ such that $(\lambda^u Q,\lambda r)\in D$ for all $\lambda\in(1-\varepsilon,1+\varepsilon)$. Thus, for each fixed $r\in I$, the relations \eqref{eq:A-scaling}–\eqref{eq:B-scaling} hold for all $\lambda$ in some open interval around $1$.

Differentiating \eqref{eq:B-scaling} with respect to $\lambda$ and evaluating at $\lambda=1$ gives, by the chain rule,
\[
  \frac{d}{d\lambda} B(\lambda r)\Big|_{\lambda=1}
  =
  r B'(r),
  \qquad
  \frac{d}{d\lambda}\bigl(\lambda^v B(r)\bigr)\Big|_{\lambda=1}
  =
  v B(r),
\]
so
\begin{equation}
  r B'(r) = v B(r)
  \qquad\text{for all } r\in I.
\end{equation}
Similarly, differentiating \eqref{eq:A-scaling} with respect to $\lambda$ and evaluating at $\lambda=1$ yields
\[
  \frac{d}{d\lambda}\bigl(A(\lambda r)\lambda^{2u}\bigr)\Big|_{\lambda=1}
  =
  r A'(r) + 2u A(r),
  \qquad
  \frac{d}{d\lambda}\bigl(\lambda^v A(r)\bigr)\Big|_{\lambda=1}
  =
  v A(r),
\]
hence
\begin{equation}
  r A'(r) + 2u A(r) = v A(r)
  \quad\Longrightarrow\quad
  r A'(r) = (v-2u)A(r)
  \qquad\text{for all } r\in I.
\end{equation}

On each connected component of $I$, these are first-order linear ODEs for $A$ and $B$. Since $A,B>0$ and are $C^1$, their solutions on $I$ are
\begin{equation}
  B(r) = b\,r^{v},
  \qquad
  A(r) = a\,r^{v-2u},
\end{equation}
for some constants $a,b>0$. Writing $m:=v$ and $p:=2u-v$ gives $v-2u=-p$ and hence
\[
  A(r) = a\,r^{-p},
  \qquad
  B(r) = b\,r^{m},
\]
so that, on the scale-free regime,
\begin{equation}
  \mathcal P(Q,r)
  =
  a\,Q^2 r^{-p} + b\,r^{m},
\end{equation}
which is the claimed form.
\end{proof}

\begin{remark}
Lemma~\ref{lem:scale-free} states that if the ledger is quadratic in $Q$ and exactly scale free on an open cone in $(Q,r)$ in the sense of Definition~\ref{def:admissible-ledger}(d), then on that cone the $Q$-dependent and $Q$-independent sectors must each be a \emph{pure} monomial in $r$. The proof only uses the homogeneity relations \eqref{eq:A-scaling}–\eqref{eq:B-scaling} for $\lambda$ near $1$: differentiability in $r$ and $\lambda$ forces $A$ and $B$ to satisfy Euler-type ODEs $rA'(r)=(v-2u)A(r)$ and $rB'(r)=vB(r)$, whose solutions are power laws. Logarithmic corrections (such as $Q^2 r^{-p}\log r$), additional intrinsic scales, or curvature- and anisotropy-dependent prefactors are all excluded by assumption, because any of these would spoil the exact local power-law homogeneity. In this sense Lemma~\ref{lem:scale-free} is simply Euler’s theorem applied separately to the one-variable homogeneous functions $A(r)$ and $B(r)$.

The lemma is not a claim that realistic ledgers have this form at all scales, but rather that on any idealised scale-free window an admissible quadratic ledger can be reduced to this homogeneous normal form. If the microscopic ledger includes additional powers of $r$ or slowly varying (for example logarithmic) prefactors, then the strong homogeneity axiom fails, and neither Lemma~\ref{lem:scale-free} nor the rigidity theorem in Sec.~\ref{sec:rigidity} applies literally; they should then be read as leading-order approximations within whatever scale range appears approximately homogeneous. The nontrivial constraints developed later arise from combining this monomial structure with branchwise optimization and node stationarity, so the two-term ledger class is best viewed as a convenient normal form once the homogeneity ansatz is adopted, not as an extra physical postulate.
\end{remark}

In physical applications it is natural to restrict attention further to $p>0$ and $m>0$, so that the transport term weakens with increasing radius and the structural term grows with radius. Within this restricted and exactly homogeneous quadratic class, Lemma~\ref{lem:scale-free} implies that any admissible ledger in a scale-free regime can, on that regime, be written in the compact form
\begin{equation}
  \label{eq:ledger-def}
  \mathcal P_{\rm br}(Q,r)
  =
  a\,\frac{Q^2}{r^{p}}
  +
  b\,r^{m},
  \qquad
  a,b>0,
  \quad
  p>0,
  \quad
  m>0.
\end{equation}
In what follows, all geometric interpretations of the Young--Herring balance and the symmetric Y-junction angle will implicitly be restricted to the subregime
\[
  0<m\le p,
\]
for which the angle law produces a real opening angle, and all statements about strictly concave Gilbert-type flux-only costs will be made in the stricter subregime
\[
  0<m<p
  \quad\Longleftrightarrow\quad
  0<\beta<1.
\]
Unless explicitly flagged as a boundary or out-of-regime example (such as $m=p$ in Sec.~\ref{sec:applications}), all examples below remain within these ranges. In the original entropy-per-information-cost formulation this two-term power law was interpreted as an EPIC ledger \cite{BennettEPIC}; in the present work~\eqref{eq:ledger-def} is simply referred to as the \emph{two-term (homogeneous) ledger class}. The remainder of the paper adopts~\eqref{eq:ledger-def} as a normal form for such idealized scale-free windows.

Given the ledger~\eqref{eq:ledger-def}, the total ledger functional on a graph configuration $\{Q_e,r_e\}$ is
\begin{equation}
  \label{eq:network-functional}
  \mathcal F[\{Q_e,r_e\}]
  =
  \sum_{e\in E} \ell_e\,\mathcal P_{\rm br}(Q_e,r_e)
  =
  \sum_{e\in E} \ell_e\left(a\,\frac{Q_e^2}{r_e^{p}}
  +
  b\,r_e^{m}\right).
\end{equation}
The geometric consequences of local optimality of $\mathcal F$ with respect to $\{r_e\}$ and to node positions, at fixed fluxes and boundary conditions, form the subject of Secs.~\ref{sec:murray}--\ref{sec:concavity}. Before turning to that analysis, it is useful to clarify how the abstract exponent $p$ relates to underlying transport equations and why a quadratic dependence on $Q$ is expected.

%----------------------------------------------------------
\subsection{Homogeneity as a selection principle and the transport exponent}
\label{subsec:EPIC-principle}
%----------------------------------------------------------

The axioms above can be viewed both as a convenient functional class and as the coarse-grained signature of a homogeneity-based selection principle. Many concrete models introduce the exponents that govern branched networks by solving detailed microscopic problems: Poiseuille flow with vessel upkeep for vascular trees, diffusion with latent-heat release for dendritic growth, or $\Gamma$-convergence of transport functionals for branched optimal transport. Additional extremum principles (such as MaxEP, MINEP, or optimizations over transport coefficients) have been proposed in related non-equilibrium contexts \cite{DewarMaxEP,MartyushevSeleznev,KleidonLorenz}, but the present work does \emph{not} propose an extra extremum principle that would single out particular values of $(m,p)$. Instead, the homogeneity relation~\eqref{eq:scale-free-homogeneity} and the quadratic dependence on $Q$ are taken as effective properties of a given scale-free regime, inferred from the underlying physics or from data; the rest of this section explores the geometric and energetic consequences of assuming such a homogeneous quadratic ledger, without claiming to derive the exponents $(m,p)$ from thermodynamics.

A useful perspective comes from linear irreversible thermodynamics. Consider a long, slender branch of length $\ell$ and effective radius $r$, with cross-section $A(r)$ and a scalar potential $\phi$ (pressure, temperature, voltage, chemical potential) obeying a linear, local transport law
\[
  \mathbf J = -\kappa\,\nabla\phi,
\]
with constant mobility or conductivity $\kappa>0$. For standard boundary conditions along a straight branch and no-flux on the lateral surface, the stationary solution has a uniform gradient along the branch, and the total flux through the cross-section is proportional to $A(r)$:
\[
  Q \propto -\,\kappa\,\frac{\Delta\phi}{\ell}\,A(r).
\]
The local entropy production density is $\sigma = T_{\rm env}^{-1}\,\mathbf J\cdot(-\nabla\phi)$, so the entropy production per unit length is proportional to
\[
  \frac{\dot S_{\rm prod}}{\ell}
  \propto
  \frac{\kappa A(r)}{T_{\rm env}}\left(\frac{\Delta\phi}{\ell}\right)^2
  \propto
  \frac{Q^2}{T_{\rm env}\,\kappa A(r)}.
\]
A standard derivation (see, for example, \cite{Onsager1931,DeGrootMazur,KondepudiPrigogine}) proceeds as follows. The key point is that, for any such linear-transport channel, the natural local ledger associated with dissipation,formalizes
\[
  \mathcal P_{\rm trans}(Q,r)
  \propto
  T_{\rm env}\frac{\dot S_{\rm prod}}{\ell},
\]
has the generic quadratic form
\begin{equation}
  \mathcal P_{\rm trans}(Q,r)
  \sim
  a\,\frac{Q^2}{A(r)},
  \label{eq:Q2-over-A}
\end{equation}
with $a>0$ absorbing $T_{\rm env}$ and material coefficients. If, in a given geometry, the cross-section scales as a power of radius, $A(r)\propto r^{p}$, the transport sector acquires the monomial form
\[
  \mathcal P_{\rm trans}(Q,r)
  \sim
  a\,\frac{Q^2}{r^{p}},
\]
with exponent $p$ set by the effective cross-sectional scaling of the underlying linear transport problem. Comparing \eqref{eq:Q2-over-A} with the transport sector $aQ^2/r^{p}$ in \eqref{eq:ledger-def}, $a$ may be identified (up to system-dependent constants) with $1/\kappa$, and $p$ can be interpreted as the exponent in the effective cross-section law $A(r)\propto r^{p}$.

Several standard examples fit this template. In Hagen--Poiseuille flow of a Newtonian fluid in a cylindrical tube, the Stokes equations yield an effective hydraulic conductance $\kappa_{\rm hyd}(r)\propto r^{4}$, so that the entropy production per unit length scales as $Q^{2}/r^{4}$ and $p=4$; in terms of \eqref{eq:Q2-over-A}, the role of $\kappa A(r)$ is played by the radius-dependent conductance $\kappa_{\rm hyd}(r)$. In diffusive or Ohmic transport through a cylindrical conductor in three dimensions, the cross-sectional area scales as $r^{2}$ and dissipation per unit length scales as $Q^{2}/r^{2}$, corresponding to $p=2$. Thin films, plates, or layered media can produce other effective exponents $p$ depending on how cross-sectional geometry scales with an effective radius.

Within this picture, the EPIC ledger \eqref{eq:ledger-def} can be regarded as a coarse-grained summary of two ingredients: a linear, local transport sector that fixes $p$ through the scaling of an effective cross-section $A(r)$, and a structural sector whose exponent $m$ encodes how upkeep cost scales with branch size (volume, surface area, or a more elaborate geometric measure). The analysis in the subsequent sections does not aim to derive $(m,p)$ from microscopic physics; instead, $(m,p)$ are treated as effective exponents that may be inferred from data or detailed simulations by measuring flux–radius scaling, concavity, and junction angles. Once a system has been assigned such effective exponents, the remaining results describe the geometric and energetic structure enforced by the EPIC ledger class: the generalized Murray law, the Young--Herring node balance and angle relation, the concave flux-only functional, and the single index $\chi=m/(m+p)$ that links them.

%========================================
\section{Forward direction: consequences of the two-term ledger}
\label{sec:forward}
%========================================

The two-term ledger \eqref{eq:ledger-def} now serves as the starting point for a ``forward'' analysis: the local consequences of its homogeneity for branch geometry and flux routing are derived in this section. The argument proceeds in four linked variational steps. First, minimizing the ledger with respect to radius at fixed flux produces a flux--radius power law and, at degree-3 nodes, a generalized Murray closure. Second, stationarity under translations of interior nodes yields a Young--Herring-type vector balance and fixes the opening angle of symmetric Y-junctions. Third, eliminating the radius recovers a concave flux-only functional of Gilbert type together with an equivalent flux-only node balance. Finally, a single dimensionless index $\chi$ organizes the flux--radius exponent, the concavity exponent, the Y-junction angle, and the internal split of the ledger into a compact algebraic dictionary.

%========================================
\subsection{Flux--radius law and generalized Murray closure}
\label{sec:murray}
%========================================

The discussion begins with the dependence of the ledger \eqref{eq:ledger-def} on branch radius at fixed flux. Optimizing the per-length cost for a single edge yields a flux--radius power law which, when combined with flux conservation at a degree-3 node, produces a generalized Murray closure. Both results are direct consequences of the homogeneous two-term structure of the ledger.

Consider a single branch of arclength $\ell>0$, radius $r>0$, and flux $Q\neq 0$. Its contribution to the ledger functional \eqref{eq:network-functional} is
\begin{equation}
  \ell\,\mathcal P_{\rm br}(Q,r)
  =
  \ell\left(a\,\frac{Q^2}{r^{p}} + b\,r^{m}\right).
\end{equation}
Since $\ell>0$ is a multiplicative constant, the optimal radius at fixed $Q$ is obtained by minimizing the per-length cost $\mathcal P_{\rm br}(Q,r)$ over $r>0$.

\begin{lemma}[Branchwise flux--radius scaling]
\label{lem:flux-radius}
Fix $Q\neq 0$. For the ledger \eqref{eq:ledger-def}, the map $r\mapsto \mathcal P_{\rm br}(Q,r)$ has a unique minimizer $r^\ast(Q)\in(0,\infty)$, and at that minimizer
\begin{equation}
  |Q| \propto \big(r^\ast\big)^{\alpha},
  \qquad
  \alpha := \frac{m+p}{2}.
\end{equation}
\end{lemma}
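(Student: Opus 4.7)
The plan is to treat this as a one-variable calculus problem on $(0,\infty)$: at fixed $Q\neq 0$ the radius dependence of $\mathcal P_{\rm br}(Q,r) = aQ^2/r^p + b r^m$ is explicit and smooth, so the minimizer can be read off directly. I would first note that as $r\to 0^+$ the transport term $aQ^2/r^p$ diverges (since $p>0$ and $Q^2>0$), and as $r\to\infty$ the structural term $b r^m$ diverges (since $m>0$ and $b>0$). Continuity on $(0,\infty)$ then forces the infimum to be attained at an interior point.

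Next I would differentiate: $\partial_r\mathcal P_{\rm br}(Q,r) = -a p\,Q^2\, r^{-p-1} + b m\, r^{m-1}$. Setting this equal to zero and rearranging gives the single equation $r^{m+p} = (ap/bm)\,Q^2$, which admits the unique positive solution
\begin{equation}
  r^\ast(Q) = \left(\frac{ap}{bm}\right)^{1/(m+p)} |Q|^{2/(m+p)}.
\end{equation}
Inverting, $|Q| = C\,(r^\ast)^{(m+p)/2}$ with $C = (bm/ap)^{1/2}$, which is the claimed flux--radius power law with $\alpha = (m+p)/2$.

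To upgrade the critical point to a unique global minimum I would use one of two routes. The cleanest is to observe that $r\mapsto -apQ^2 r^{-p-1}$ is strictly increasing (since $-r^{-p-1}$ is) and $r\mapsto bm r^{m-1}$ is non-decreasing, so $\partial_r\mathcal P_{\rm br}$ is strictly monotone on $(0,\infty)$ and has at most one zero. Alternatively, the second-derivative test at $r^\ast$ evaluates, after using the first-order condition $bm(r^\ast)^{m+p} = apQ^2$ to eliminate $(r^\ast)^{m+p}$, to $\partial_r^2\mathcal P_{\rm br}(Q,r^\ast) = (m+p)\,apQ^2\,(r^\ast)^{-p-2} > 0$, with the convenient feature that the potentially negative contribution $bm(m-1)(r^\ast)^{m-2}$ combines with $ap(p+1)Q^2 (r^\ast)^{-p-2}$ to give a manifestly positive result for all $m,p>0$, independent of the sign of $m-1$.

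There is essentially no obstacle here: the statement is a one-line Euler-type computation once the two-term ledger \eqref{eq:ledger-def} is in hand. The only care required is in checking that strict convexity holds uniformly in $m$ (the second-derivative computation above handles this) so that the critical point is genuinely unique and a minimum, rather than merely a stationary point; this is what justifies the later application of $r^\ast(Q)$ in the constant-radius reduction of the branchwise cost and in the definition of the effective flux-only ledger $\mathcal P^\ast(Q)$.
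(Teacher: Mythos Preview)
Your argument is correct and follows the same one-variable calculus route as the paper: differentiate, solve the first-order condition for the unique critical radius, and verify it is a global minimum. One small slip: your ``cleanest'' monotonicity route asserts that $r\mapsto bm\,r^{m-1}$ is non-decreasing, which fails for $0<m<1$, so $\partial_r\mathcal P_{\rm br}$ is not obviously monotone as a sum of two monotone pieces in that range. This does not break your proof, since you have already shown (i) the infimum is attained at an interior point by the boundary-divergence argument and (ii) the first-order equation $r^{m+p}=(ap/bm)Q^2$ has a unique positive root; together these force that root to be the unique global minimizer, and your second-derivative computation independently confirms it is a strict local minimum. The paper handles the monotonicity cleanly for all $m,p>0$ by factoring
\[
  \partial_r\mathcal P_{\rm br}(Q,r)
  = r^{-(p+1)}\bigl(-p\,a\,Q^2 + m\,b\,r^{m+p}\bigr),
\]
where the bracket is strictly increasing in $r$ and changes sign exactly once; this is the minor refinement you would want in place of the $r^{m-1}$ claim.
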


\begin{proof}
For fixed nonzero $Q$,
\[
  \mathcal P_{\rm br}(Q,r)
  =
  a\,Q^2 r^{-p} + b\,r^m,
  \qquad r>0.
\]
Differentiation with respect to $r$ gives
\[
  \frac{\partial \mathcal P_{\rm br}}{\partial r}
  =
  -p\,a\,Q^2 r^{-(p+1)}
  +
  m\,b\,r^{m-1}.
\]
Setting the derivative to zero yields
\[
  -p\,a\,Q^2 r^{-(p+1)} + m\,b\,r^{m-1} = 0
  \quad\Longleftrightarrow\quad
  p\,a\,Q^2 r^{-(p+1)} = m\,b\,r^{m-1}.
\]
Multiplying both sides by $r^{p+1}>0$ leads to
\[
  p\,a\,Q^2 = m\,b\,r^{m+p},
\]
and hence
\[
  Q^2 = \frac{m b}{p a}\,r^{m+p}.
\]
Taking square roots gives
\[
  |Q| = \sqrt{\frac{m b}{p a}}\;r^{(m+p)/2} \propto r^{\alpha},
  \qquad \alpha:=\frac{m+p}{2}.
\]

For uniqueness and minimality, it is convenient to rewrite
\[
  \frac{\partial \mathcal P_{\rm br}}{\partial r}
  =
  r^{-(p+1)}\left(-p\,a\,Q^2 + m\,b\,r^{m+p}\right).
\]
The prefactor $r^{-(p+1)}$ is positive for all $r>0$. The bracketed term
\[
  f(r):=-p\,a\,Q^2 + m\,b\,r^{m+p}
\]
is strictly increasing in $r$ because
\[
  f'(r) = (m+p)\,m\,b\,r^{m+p-1} > 0
  \qquad\text{for }r>0.
\]
Moreover, $\lim_{r\to 0^+} f(r) = -p\,a\,Q^2<0$ and $\lim_{r\to\infty} f(r)=+\infty$. Thus $f(r)$ crosses zero exactly once, from negative to positive, so $\partial_r \mathcal P_{\rm br}$ changes sign from negative to positive at a single radius. Consequently $\mathcal P_{\rm br}(Q,r)$ decreases, attains a unique minimum at $r=r^\ast(Q)$, and then increases, which proves the claim.
\end{proof}

\begin{remark}[Zero-flux edges]
Edges with $Q=0$ are degenerate for this optimization (the ledger is minimized only in a limit $r\to 0$ or $r\to\infty$ depending on $(m,p)$), so they are excluded from the branchwise analysis and play no role in the geometric relations derived below. In a global network they can still appear as dead-end branches or unused paths. Within the scale-free ledger picture, such zero-flux edges either shrink away in a fully optimal configuration (if the structural term dominates in the relevant regime) or else belong to a non-homogeneous part of the ledger where the simple two-term scale-free form~\eqref{eq:ledger-def} no longer applies. In particular, they are not part of the scale-free cone $D$ of Definition~\ref{def:admissible-ledger} on which the homogeneity arguments of Sec.~\ref{sec:branch-model} are carried out.
\end{remark}

Lemma~\ref{lem:flux-radius} shows that, at a branchwise optimum, the flux and radius along any edge with $Q_e\neq 0$ satisfy a simple power law. At the level of the full network, a natural optimization problem is
\[
  \min_{\{Q_e,r_e\}}\;\sum_{e\in E}\ell_e\,\mathcal P_{\rm br}(Q_e,r_e)
\]
subject to Kirchhoff flux constraints at interior nodes and prescribed boundary fluxes. Because the functional is a sum over edges and, for each fixed $Q_e$, the map $r_e\mapsto\mathcal P_{\rm br}(Q_e,r_e)$ is strictly convex with a unique minimizer, any \emph{global} minimizer $(\{Q_e,r_e\})$ must in particular minimize $\mathcal P_{\rm br}(Q_e,r_e)$ with respect to $r_e$ at fixed $Q_e$ on every edge. The two-stage procedure used here—first solve the branchwise problem at fixed $Q_e$, then minimize the resulting flux-only functional over $\{Q_e\}$—is therefore a convenient way of expressing \emph{necessary first-variation conditions} for $\mathcal F$ under radius variations and node translations. These conditions are not claimed to be sufficient to characterize global minimizers in the full configuration space, which would also require control over topological changes (edge creation/removal, reconnection, etc.). In physical terms, ``fixed $Q$'' in Lemma~\ref{lem:flux-radius} should be read as freezing, for the purposes of the local variation, the flux pattern induced by boundary conditions and nonlocal constraints; in a complete model the sets $\{Q_e\}$ and $\{r_e\}$ adjust self-consistently.

The branchwise scaling relation can be written as
\begin{equation}
  \label{eq:Q-r-alpha}
  |Q_e| = K\,r_e^\alpha,
  \qquad
  \alpha = \frac{m+p}{2},
\end{equation}
with a single positive constant
\[
  K = \sqrt{\frac{m b}{p a}}
\]
determined by the ledger parameters and common to all branches in a given regime. The sign of $Q_e$ is encoded in the orientation of the edge; the ledger depends only on $|Q_e|$.

\paragraph{Generalized Murray closure at junctions.}

The branchwise scaling relation \eqref{eq:Q-r-alpha} combines with flux conservation to produce a generalized Murray law at degree-3 junctions. To make contact with the classical setting, attention is restricted to a single \emph{parent} branch feeding two \emph{daughter} branches. Consider an interior node with three incident edges indexed by $i=0,1,2$. The analysis here is restricted to tree-like regions of the network (no local loops) and to the standard Murray geometry in which, in a suitable choice of edge orientations, all fluxes at the node share the same sign. Concretely, write
\[
  Q_i > 0, \qquad i=0,1,2,
\]
for the flux magnitudes and choose orientations so that flux conservation takes the form
\begin{equation}
  Q_0 = Q_1 + Q_2,
  \label{eq:node-flux-positive}
\end{equation}
with branch $0$ feeding branches $1$ and $2$. This excludes cancellation-type nodes in which one branch partially feeds another in the opposite direction, or nodes where flows reverse along an edge; such configurations are not the target of the Murray-type closure derived here. In the remainder of this subsection, a ``degree-3 Murray node'' will always mean a node with this single-parent, two-daughter, no-cancellation structure.

Applying the flux--radius law \eqref{eq:Q-r-alpha} to each edge gives
\[
  Q_i = K r_i^\alpha,
  \qquad i=0,1,2.
\]
Substituting into \eqref{eq:node-flux-positive} and canceling $K$ produces
\[
  r_0^\alpha = r_1^\alpha + r_2^\alpha.
\]

\begin{corollary}[Generalized Murray law]
\label{cor:murray}
Let $v$ be a degree-3 node with one parent edge $0$ and two daughter edges $1,2$, and assume that there is a consistent choice of edge orientations such that all three fluxes at $v$ are strictly positive and obey
\begin{equation}
  Q_0 = Q_1 + Q_2,
  \qquad
  Q_i>0,\; i=0,1,2.
\end{equation}
Equivalently, $v$ is a single-parent, two-daughter, no-cancellation node in the sense described above. Then, at a branchwise optimum of the ledger \eqref{eq:ledger-def}, the corresponding radii satisfy
\begin{equation}
  r_0^\alpha
  =
  r_1^\alpha + r_2^\alpha,
  \qquad
  \alpha=\frac{m+p}{2}.
\end{equation}
\end{corollary}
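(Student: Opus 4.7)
The plan is to apply Lemma~\ref{lem:flux-radius} edgewise to each of the three incident branches and then impose flux conservation. Since the ledger~\eqref{eq:ledger-def} depends only on $|Q_e|$ and couples different edges only through the global Kirchhoff constraints, the branchwise minimization in the radius decouples across edges at fixed $|Q_e|$. Applying Lemma~\ref{lem:flux-radius} to each of the three incident edges therefore gives the common power law $|Q_i| = K\,r_i^{\alpha}$ with the \emph{same} proportionality constant $K=\sqrt{mb/(pa)}$ on all three branches, because $K$ depends only on the ledger parameters $(a,b,m,p)$ and not on which edge is considered.

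Next I would use the corollary's hypothesis to simplify the node constraint. By assumption there is a consistent choice of edge orientations near $v$ such that $Q_i>0$ for $i=0,1,2$ and the Kirchhoff condition takes the positive form $Q_0 = Q_1 + Q_2$, with no cancellation between incoming and outgoing contributions. Substituting the branchwise relation $Q_i = K\,r_i^{\alpha}$ and cancelling the common factor $K>0$ yields
\begin{equation}
  r_0^{\alpha} = r_1^{\alpha} + r_2^{\alpha}, \qquad \alpha = \frac{m+p}{2},
\end{equation}
which is the desired generalized Murray closure.

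The only genuinely substantive point is the universality of $K$ across the three edges: this rests on working within a single scale-free regime, where Lemma~\ref{lem:scale-free} and Definition~\ref{def:admissible-ledger} impose one set of exponents and coefficients $(a,b,m,p)$ on the entire cone. If $(a,b)$ were allowed to vary edge-by-edge, the same computation would instead give a weighted closure $K_0\,r_0^{\alpha} = K_1\,r_1^{\alpha} + K_2\,r_2^{\alpha}$ with $K_i=\sqrt{m b_i/(p a_i)}$, so the homogeneous normal form is what reduces the balance to the classical additive shape. I do not expect a real obstacle beyond recording this observation and the no-cancellation hypothesis, which is what ensures that the signed Kirchhoff law becomes an identity among positive quantities and hence among positive powers $r_i^{\alpha}$.
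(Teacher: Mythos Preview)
Your argument is correct and is essentially identical to the paper's own derivation: apply the branchwise flux--radius law $Q_i = K\,r_i^{\alpha}$ from Lemma~\ref{lem:flux-radius} to each of the three edges, substitute into the positive Kirchhoff relation $Q_0 = Q_1 + Q_2$, and cancel the common constant $K$. Your additional remarks on the universality of $K$ and the weighted closure that would arise for edge-dependent $(a,b)$ are helpful elaborations but do not change the core argument.
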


Several classical cases are recovered as particular points in the $(m,p)$ plane. For $(p,m)=(4,2)$, corresponding to Poiseuille transport with volume-priced upkeep, the exponent is $\alpha=3$ and the relation reduces to Murray's classical law $r_0^{3}=r_1^{3}+r_2^{3}$. For $(p,m)=(4,1)$, appropriate to surface-priced upkeep, $\alpha=5/2$; for $(p,m)=(2,2)$, corresponding to a conduction-like transport term with $p=2$ and volume-based upkeep, the closure becomes area conservation $r_0^2 = r_1^2 + r_2^2$. The essential point is that, whenever the single-parent two-daughter, same-sign flux assumption holds and the ledger is in the two-term class~\eqref{eq:ledger-def}, any choice of exponents $(m,p)$ produces a corresponding generalized Murray exponent $\alpha=(m+p)/2$ and a node closure of the form of Corollary~\ref{cor:murray}. In this sense, the usual cubic Murray law and its sub- and super-cubic generalizations are geometric manifestations of the underlying two-term homogeneous ledger.

%========================================
\subsection{Node translations and Young--Herring angle law}
\label{sec:angles}
%========================================
The next variational step concerns the dependence of the ledger functional on the spatial position of interior nodes. Fluxes and radii are held at their branchwise optima, and the response of the total cost to infinitesimal translations of a node is examined. Stationarity under such translations yields a vector balance at each junction, identical in structure to a Young--Herring triple-junction condition. For symmetric Y-junctions this balance leads to a closed-form expression for the opening angle in terms of the ledger exponents $(m,p)$. Throughout this subsection a restricted class of variations is considered in which node positions are perturbed, $\delta\bm x_v\neq 0$, while the fluxes and radii on the incident edges are held fixed, $\delta Q_e=\delta r_e=0$. This is the standard Steiner-type setup and is justified because the along-edge radius optimization from Sec.~\ref{sec:murray} decouples from edge length, and because the flux pattern is treated as imposed by distant boundary conditions on the timescale of node motion. The Young--Herring law derived below should therefore be read as a local necessary condition for stationarity under these node translations; a fully coupled optimization in which $\{Q_e,r_e\}$ are also allowed to adjust would in general involve additional terms beyond the length-variation contribution analyzed here.

To make this precise, let $v\in V$ be an interior node and let $E(v)$ denote the set of edges incident at $v$. For each $e\in E(v)$ let $\ell_e$ denote the distance from $v$ to a fixed far-field point along that edge, so that $\ell_e$ changes if the node is displaced; let $r_e$ and $Q_e$ denote the corresponding radius and flux. Denote by $\hat{\bm t}_e$ the unit tangent along edge $e$ at $v$, oriented away from the node. Each edge is regarded as a smooth embedded curve parametrized by arclength, so that the first variation of its length with respect to a perturbation of the common endpoint is given by the unit tangent at that endpoint. Equivalently, the network may be treated as piecewise straight between nodes; curvature effects then enter only at second order in the node displacement and do not affect the first-variation balance derived below. In the node-translation calculation the fluxes $Q_e$ and radii $r_e$ are held fixed, and any global constraints (such as boundary fluxes or a fixed total material budget) are assumed to be unaffected by the local translation.

The contribution of the edges incident at $v$ to the ledger functional is
\begin{equation}
  \mathcal F_v
  =
  \sum_{e\in E(v)} \ell_e\,\mathcal P_{\rm br}(Q_e,r_e).
\end{equation}
Under an infinitesimal translation of the node,
\[
  X(v)\mapsto X(v)+\delta\bm x_v,
\]
the lengths change, to first order, as
\begin{equation}
  \delta \ell_e = -\,\hat{\bm t}_e\cdot\delta\bm x_v,
  \qquad e\in E(v),
\end{equation}
while $Q_e$ and $r_e$ are held fixed (a displacement of the node along the edges without changing their cross-sections or far-field endpoints). The first variation of $\mathcal F_v$ is therefore
\begin{equation}
  \delta \mathcal F_v
  =
  \sum_{e\in E(v)} \mathcal P_{\rm br}(Q_e,r_e)\,\delta\ell_e
  =
  -\,\delta\bm x_v\cdot
  \sum_{e\in E(v)} \mathcal P_{\rm br}(Q_e,r_e)\,\hat{\bm t}_e.
\end{equation}

Stationarity with respect to arbitrary $\delta\bm x_v$ requires
\begin{equation}
  \label{eq:raw-node-balance}
  \sum_{e\in E(v)} \mathcal P_{\rm br}(Q_e,r_e)\,\hat{\bm t}_e = \bm 0.
\end{equation}

In configurations that are already optimal with respect to $r_e$ at fixed $Q_e$, the branchwise condition of Lemma~\ref{lem:flux-radius} can be used to eliminate $Q_e$ in favour of $r_e$. From $Q_e^2=(m b/(p a))\,r_e^{m+p}$ one finds
\[
  a\,\frac{Q_e^2}{r_e^p}
  =
  a\,\frac{m b}{p a}\,r_e^{m+p-p}
  =
  \frac{m b}{p}\,r_e^m.
\]
Consequently, on each edge
\begin{equation}
  \mathcal P_{\rm br}(Q_e,r_e)
  =
  a\,\frac{Q_e^2}{r_e^{p}}
  +
  b\,r_e^m
  =
  \left(\frac{m b}{p} + b\right) r_e^m
  =
  \kappa\,r_e^m,
\end{equation}
with a node-independent prefactor
\[
  \kappa = b\Big(1+\frac{m}{p}\Big)>0.
\]
Substituting into \eqref{eq:raw-node-balance} yields the compact vector closure
\begin{equation}
  \label{eq:bit-tension}
  \sum_{e\in E(v)} r_e^m\,\hat{\bm t}_e = \bm 0.
\end{equation}

Relation \eqref{eq:bit-tension} admits a natural mechanical interpretation. For fixed $Q_e$ and $r_e$, the per-length contribution of edge $e$ to the ledger is $\mathcal P_{\rm br}(Q_e,r_e)$, so the derivative of $\mathcal F_v$ with respect to a change in the length of that edge at fixed $(Q_e,r_e)$ is simply
\[
  \frac{\partial}{\partial \ell_e}\big(\ell_e \mathcal P_{\rm br}(Q_e,r_e)\big)
  = \mathcal P_{\rm br}(Q_e,r_e)
  = \kappa r_e^m.
\]
Thus, in the branchwise optimal regime, $r_e^m$ weights the tangents in \eqref{eq:bit-tension} in the same way that line tensions weight interface normals in a classical Young--Herring law. Introducing the effective skeleton tensions
\begin{equation}
  \tau_e^{\rm eff} := \kappa r_e^m,
\end{equation}
equation \eqref{eq:bit-tension} can be rewritten as
\begin{equation}
  \sum_{e\in E(v)} \tau_e^{\rm eff}\,\hat{\bm t}_e = \bm 0.
  \label{eq:EPIC-Young-Herring}
\end{equation}
Equation~\eqref{eq:EPIC-Young-Herring} is therefore a Young--Herring triple-junction condition at the level of the coarse-grained branch skeleton \cite{Herring1951,CahnHilliard1958}: the directions of the branches at a node are weighted by effective tensions derived from the EPIC ledger rather than from microscopic surface free energies.

\paragraph{Symmetric Y-junction and opening angle.}

The vector balance \eqref{eq:bit-tension} determines the opening angle at symmetric Y-junctions. The analysis remains in the single-parent, two-daughter, no-cancellation setting of Sec.~\ref{sec:murray}, so that, in a consistent orientation, the parent carries the sum of the daughter fluxes with all three fluxes having the same sign. Consider such a node with one parent edge (index $0$) and two daughter edges (indices $1,2$) of equal radii $r_1=r_2=:r_d$, meeting at opening angle $\theta$ symmetric about the parent axis. Choose coordinates so that the parent branch points along the negative $x$-axis:
\[
  \hat{\bm t}_0 = (-1,0),
\]
and the daughters are symmetric about the $x$-axis:
\[
  \hat{\bm t}_1 = \big(\cos\tfrac{\theta}{2},\sin\tfrac{\theta}{2}\big),
  \qquad
  \hat{\bm t}_2 = \big(\cos\tfrac{\theta}{2},-\sin\tfrac{\theta}{2}\big).
\]
The $y$-component of \eqref{eq:bit-tension} vanishes by symmetry. The $x$-component reads
\[
  r_0^m(-1) + r_d^m\cos\tfrac{\theta}{2} + r_d^m\cos\tfrac{\theta}{2} = 0,
\]
that is,
\begin{equation}
  -r_0^m + 2 r_d^m \cos\frac{\theta}{2} = 0.
\end{equation}
Solving for $\cos(\theta/2)$ gives
\begin{equation}
  \cos\frac{\theta}{2}
  =
  \frac{r_0^m}{2 r_d^m}.
  \label{eq:cos-theta-radius}
\end{equation}

For symmetric Y-junctions the generalized Murray law of Corollary~\ref{cor:murray} implies
\[
  r_0^\alpha = 2 r_d^\alpha,
\]
with $\alpha=(m+p)/2$. Raising both sides to the power $m/\alpha$ yields
\[
  r_0^m = \big(2 r_d^\alpha\big)^{m/\alpha} = 2^{m/\alpha} r_d^m.
\]
Substituting into \eqref{eq:cos-theta-radius} gives
\begin{equation}
  \cos\frac{\theta}{2}
  =
  2^{\frac{m}{\alpha}-1}.
  \label{eq:theta-m-alpha}
\end{equation}
Using $\alpha=(m+p)/2$ from Lemma~\ref{lem:flux-radius}, this can be rewritten as
\begin{equation}
  \cos\frac{\theta}{2}
  =
  2^{\frac{2m}{m+p}-1}
  =
  2^{\frac{m-p}{m+p}}.
  \label{eq:theta-mp}
\end{equation}
For $0<m\le p$ the exponent $(m-p)/(m+p)$ is non-positive, so that $0<\cos(\theta/2)\le 1$ and a real symmetric Y-junction angle exists. For $m>p$ the exponent is positive and the right-hand side of \eqref{eq:theta-mp} exceeds $1$, signaling that the simple symmetric Y-geometry is no longer compatible with the homogeneous ledger in that parameter range. Accordingly, geometric interpretations of $\theta$ will be restricted to $0<m\le p$. The limiting cases are instructive:
\begin{itemize}
  \item As $m\to 0^+$ (radius-independent structural pricing),
        \[
          \frac{m-p}{m+p}\to -1,
          \qquad
          \cos\frac{\theta}{2}\to \frac{1}{2},
        \]
        so $\theta\to 120^\circ$, the Steiner angle for minimal-length trees with equal line tensions \cite{MorganGMT}.
  \item As $m\to p^-$ (transport and structural exponents coincide),
        \[
          \frac{m-p}{m+p}\to 0,
          \qquad
          \cos\frac{\theta}{2}\to 1,
        \]
        so $\theta\to 0^\circ$ and the junction degenerates.
\end{itemize}

Thus, once the ledger exponents $(m,p)$ are specified, the symmetric Y-junction angle is fixed; no additional geometric parameter is available at the node. The law \eqref{eq:theta-mp} therefore plays the role of a Snell-type refraction relation at junctions, with the effective ``refractive properties'' encoded in $(m,p)$. In Sec.~\ref{sec:chi-dictionary} this relation is re-expressed in terms of the concavity exponent $\beta$ of the flux-only ledger as
\[
  \cos\frac{\theta}{2} = 2^{\beta-1},
\]
which is precisely the classical Snell-type angle law for symmetric triple junctions in Gilbert/branched-transport models, obtained by minimizing the functional $\sum_e \ell_e |Q_e|^\beta$ and enforcing the flux-only balance $\sum_e |Q_e|^\beta \hat{\bm t}_e=0$; see, for example, \cite{Gilbert1967,BernotCasellesMorel,XiaBranchedTransport}. Within the homogeneous two-term ledger class, that flux-only angle law is therefore not an independent phenomenological assumption: the same exponents $(m,p)$ that control the flux--radius scaling and the concavity of the effective cost also determine the symmetric opening angle via $\chi$.

%========================================
\subsection{Effective flux-only cost and concavity}
\label{sec:concavity}
%========================================

The previous subsections showed that branchwise optimality of the two-term ledger~\eqref{eq:ledger-def} produces a power-law flux--radius relation and that node translations enforce a Young--Herring-type vector balance with radius weights $r^m$. The remaining local step is to eliminate the radius altogether and obtain an effective flux-only cost. This yields a single exponent $\beta$ that governs the concavity of the network functional and connects the present framework directly to classical branched-transport functionals of Gilbert type.

\begin{lemma}[Effective edge cost and concavity]
\label{lem:beta}
Let $r^\ast(Q)$ denote the branchwise minimizer of the ledger $\mathcal P_{\rm br}(Q,r)$ from Lemma~\ref{lem:flux-radius}. Then the minimized per-length edge cost
\begin{equation}
  \mathcal P_{\rm br}^\ast(Q)
  :=
  \min_{r>0} \mathcal P_{\rm br}(Q,r)
\end{equation}
scales as a power of $|Q|$,
\begin{equation}
  \mathcal P_{\rm br}^\ast(Q) \propto |Q|^\beta,
  \qquad
  \beta := \frac{2m}{m+p}.
\end{equation}
For $0<m<p$ one has $0<\beta<1$, and the flux-only cost is strictly concave in $|Q|$.
\end{lemma}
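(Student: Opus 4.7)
The plan is to substitute the branchwise minimizer $r^\ast(Q)$ from Lemma~\ref{lem:flux-radius} directly into $\mathcal P_{\rm br}(Q,r)$ and read off the exponent. First I would invert the stationarity relation $(r^\ast)^{m+p} = (pa/(mb))\,Q^2$ established inside the proof of Lemma~\ref{lem:flux-radius} to express $r^\ast(Q) = C\,|Q|^{2/(m+p)}$ for an explicit positive constant $C$. This already isolates the single scale in the problem and suggests that the minimized cost should be a pure monomial in $|Q|$.

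The substitution simplifies sharply because of the ratio identity already used in Sec.~\ref{sec:angles}: the first-order condition forces $a\,Q^2/(r^\ast)^p = (m/p)\,b\,(r^\ast)^m$, so the transport and structural terms stand in a fixed ratio $m:p$ at the optimum. Collapsing the two-term ledger on this ratio gives
\[
  \mathcal P_{\rm br}^\ast(Q)
  = \Big(\tfrac{m}{p}+1\Big) b\,(r^\ast(Q))^m
  = \kappa\,(r^\ast(Q))^m,
\]
and combining with $r^\ast(Q)\propto|Q|^{2/(m+p)}$ yields $\mathcal P_{\rm br}^\ast(Q)\propto |Q|^{2m/(m+p)}=|Q|^\beta$ with $\beta=2m/(m+p)$. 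Strict concavity in $|Q|$ on $(0,\infty)$ is then a one-line inspection of the exponent: the condition $0<m<p$ is equivalent to $\beta\in(0,1)$, and for such $\beta$ the function $x\mapsto x^\beta$ on $(0,\infty)$ has second derivative $\beta(\beta-1)x^{\beta-2}<0$, hence is strictly concave.

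I do not expect a genuine obstacle here, since essentially all the work has been done in Lemma~\ref{lem:flux-radius} and in the ratio identity of Sec.~\ref{sec:angles}. The only point worth flagging is interpretive: $\mathcal P_{\rm br}^\ast$ is strictly concave as a function of the scalar $|Q|\in(0,\infty)$, but by evenness of $\mathcal P_{\rm br}$ in $Q$ the map $Q\mapsto\mathcal P_{\rm br}^\ast(Q)$ has a cusp at $Q=0$ and is not concave on all of $\mathbb R$. Concavity in the lemma should therefore be read in the one-sided sense standard in the Gilbert and branched-transport literature, which is exactly what is needed for the effective flux-only functional in Sec.~\ref{sec:concavity}.
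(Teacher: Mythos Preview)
Your proposal is correct and follows essentially the same approach as the paper: both substitute the minimizer $r^\ast(Q)\propto |Q|^{2/(m+p)}$ into the ledger and then verify strict concavity of $x\mapsto x^\beta$ for $0<\beta<1$ via the sign of the second derivative. The only cosmetic difference is that you first collapse the two terms using the optimality ratio $aQ^2/(r^\ast)^p=(m/p)\,b(r^\ast)^m$ and then read off the exponent, whereas the paper substitutes into each term separately and checks that the two exponents $2-2p/(m+p)$ and $2m/(m+p)$ coincide; your interpretive remark about the cusp at $Q=0$ is a useful clarification but not required for the lemma as stated.
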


\begin{proof}
From Lemma~\ref{lem:flux-radius}, the branchwise minimizer satisfies a power-law scaling
\[
  r^\ast(Q)\propto |Q|^{2/(m+p)}.
\]
More explicitly, there exists a constant $C>0$ such that
\[
  r^\ast(Q) = C\,|Q|^{2/(m+p)}.
\]
Substituting this into the ledger~\eqref{eq:ledger-def} gives
\begin{align}
  \mathcal P_{\rm br}^\ast(Q)
  &=
  a\,\frac{Q^2}{(r^\ast)^p} + b\,(r^\ast)^m \nonumber\\
  &=
  a\,Q^2 C^{-p}|Q|^{-2p/(m+p)}
  +
  b\,C^m |Q|^{2m/(m+p)} \nonumber\\
  &=
  a C^{-p}|Q|^{2-2p/(m+p)}
  +
  b C^{m}|Q|^{2m/(m+p)}.
\end{align}
The two exponents coincide, since
\[
  2 - \frac{2p}{m+p}
  =
  \frac{2(m+p) - 2p}{m+p}
  =
  \frac{2m}{m+p}.
\]
Both terms therefore scale as $|Q|^{2m/(m+p)}$. Writing
\[
  \beta := \frac{2m}{m+p},
\]
one obtains
\[
  \mathcal P_{\rm br}^\ast(Q) \propto |Q|^\beta.
\]

The map $q\mapsto |q|^\beta$ is strictly concave on $\mathbb R\setminus\{0\}$ whenever $0<\beta<1$, since its second derivative is negative away from the origin. For $0<m<p$ one indeed has $0<\beta<1$, so $\mathcal P_{\rm br}^\ast(Q)$ is strictly concave in $|Q|$ on this parameter range.
\end{proof}

Once the radius has been eliminated in this way, the network functional takes the canonical branched-transport form
\begin{equation}
  \mathcal F^\ast[\{Q_e\}]
  \propto
  \sum_{e\in E} \ell_e |Q_e|^\beta,
  \qquad
  \beta = \frac{2m}{m+p},
  \label{eq:flux-only-Gilbert}
\end{equation}
with the exponent $\beta$ fixed by the ledger exponents $(m,p)$ through Lemma~\ref{lem:beta}. For $0<m<p$ one has $0<\beta<1$, so the flux-only cost is strictly concave and the cost of carrying flux is strictly subadditive: for $Q_1,Q_2>0$,
\[
  \mathcal P_{\rm br}^\ast(Q_1+Q_2)
  <
  \mathcal P_{\rm br}^\ast(Q_1)+\mathcal P_{\rm br}^\ast(Q_2).
\]
The boundary case $m=p$ gives $\beta=1$ and a \emph{linear} flux-only cost $|Q|$, which marks the critical borderline between strictly concave Gilbert-type behavior and linear transport. For $0<m<p$, merging fluxes into a common edge is therefore energetically beneficial, which is the mechanism behind trunk sharing and deep tree hierarchies in branched-transport networks.

Promoting the $Q_e$ to primary variables and viewing~\eqref{eq:flux-only-Gilbert} as a cost functional yields precisely the classical Gilbert-type branched-transport functional $\sum_e \ell_e |Q_e|^\beta$ studied in optimal network theory \cite{Gilbert1967,BernotCasellesMorel,XiaBranchedTransport,BouchitteButtazzo,AmbrosioFuscoPallara}. The strict concavity condition $\beta<1$ encapsulates the energetic preference for routing flux through shared trunks instead of through a collection of disjoint paths.

At the node level, the radius-based Young--Herring balance~\eqref{eq:bit-tension} and the flux--radius scaling $|Q_e|\propto r_e^\alpha$ with $\alpha=(m+p)/2$ together imply an equivalent flux-only balance. Eliminating $r_e$ via $r_e\propto |Q_e|^{1/\alpha}$ and using $\beta=2m/(m+p)$ shows that the vector condition
\[
  \sum_{e\in E(v)} r_e^m\,\hat{\bm t}_e = \bm 0
\]
can be rewritten, up to an overall constant factor, as
\[
  \sum_{e\in E(v)} |Q_e|^\beta \hat{\bm t}_e = \bm 0.
\]
This recovers the flux-weighted Young--Herring node law associated with the flux-only functional~\eqref{eq:flux-only-Gilbert} in the branched-transport literature; see, for example, \cite{Gilbert1967,XiaBranchedTransport,BernotCasellesMorel}. When applied to a symmetric degree-3 node it reproduces the standard Snell-type relation $\cos(\theta/2)=2^{\beta-1}$. Within the homogeneous two-term ledger class, the flux-only node balance, the exponent $\beta$, and the associated angle law are therefore \emph{not} additional phenomenological assumptions: they all follow from the same underlying ledger~\eqref{eq:ledger-def}, with $\beta$ algebraically tied to $(m,p)$ and hence to the generalized Murray law and the symmetric junction angle via the EPIC index $\chi$.

%========================================
\subsection{Single-index dictionary and ledger-sector split}
\label{sec:chi-dictionary}
%========================================

The previous subsections identified three quantities that govern the geometry of locally optimal networks in the two-term ledger class~\eqref{eq:ledger-def}: the flux--radius exponent $\alpha$, the concavity exponent $\beta$ of the effective flux-only cost, and the symmetric Y-junction opening angle $\theta$. Each arises from the same pair of ledger exponents $(m,p)$, but that parametrization obscures how these observables constrain one another. This subsection introduces a single dimensionless index $\chi$ that organizes $\alpha$, $\beta$, and $\theta$ into a compact algebraic dictionary and then shows that the same index quantifies the split between transport and structural contributions inside the ledger itself.

\paragraph{EPIC index and Snell--Murray identities.}

From Lemma~\ref{lem:flux-radius}, Lemma~\ref{lem:beta}, and the angle relation~\eqref{eq:theta-mp}, the exponents extracted from the ledger~\eqref{eq:ledger-def} obey
\begin{equation}
  \alpha = \frac{m+p}{2},
  \qquad
  \beta  = \frac{2m}{m+p},
  \qquad
  \cos\frac{\theta}{2} = 2^{\frac{m-p}{m+p}}.
\end{equation}
All three are therefore functions of the two ledger exponents $(m,p)$. A more transparent parametrization is obtained by introducing the ratio
\begin{equation}
  \chi
  :=
  \frac{m}{m+p}
  \in (0,1),
\end{equation}
referred to here simply as the \emph{index} $\chi$. By construction,
\begin{equation}
  \beta
  =
  \frac{2m}{m+p}
  =
  2\chi.
\end{equation}
Solving $\alpha=(m+p)/2$ for $m$ and $p$ in terms of $(\alpha,\chi)$ gives
\[
  m+p = 2\alpha,
  \qquad
  m = \chi(m+p) = 2\alpha\chi,
\]
and hence
\begin{equation}
  m = 2\alpha\chi,
  \qquad
  p = 2\alpha(1-\chi),
\end{equation}
so that
\begin{equation}
  \chi
  =
  \frac{m}{m+p}
  =
  1 - \frac{p}{2\alpha}.
\end{equation}
For the opening angle, the exponent in~\eqref{eq:theta-mp} can be written
\begin{equation}
  \frac{m-p}{m+p}
  =
  1 - 2\frac{p}{m+p}
  =
  1 - 2(1-\chi)
  =
  2\chi-1,
\end{equation}
which yields
\begin{equation}
  \cos\frac{\theta}{2}
  =
  2^{2\chi-1},
  \qquad
  \chi
  =
  \frac{1}{2}\Big(1 + \log_2\cos\tfrac{\theta}{2}\Big).
\end{equation}
These identities can be collected in a single dictionary.

\begin{theorem}[Single-index dictionary within the two-term ledger class]
\label{thm:chi-dictionary}
Consider a branch network governed, in a scale-free regime, by the ledger~\eqref{eq:ledger-def} with $a,b>0$ and $m,p>0$, and a configuration $\{Q_e,r_e\}$ that is locally optimal with respect to edge radii and stationary with respect to interior node translations, under flux conservation and fixed boundary conditions. Define:
\begin{itemize}
  \item $\alpha$ by the branchwise flux--radius law $|Q|\propto r^\alpha$,
  \item $\beta$ by the effective edge cost $\mathcal P_{\rm br}^\ast(Q)\propto |Q|^\beta$,
  \item $\theta$ as the symmetric Y-junction opening angle.
\end{itemize}
Then the EPIC index
\begin{equation}
  \chi := \frac{m}{m+p}\in(0,1)
\end{equation}
admits the equivalent representations
\begin{equation}
  \boxed{
    \chi
    =
    1 - \frac{p}{2\alpha}
    =
    \frac{\beta}{2}
    =
    \frac{1}{2}\Big(1 + \log_2\cos\tfrac{\theta}{2}\Big),
  }
\end{equation}
and, provided $0<m\le p$ so that a real symmetric Y-angle exists,
\begin{equation}
  \boxed{
    \cos\frac{\theta}{2}
    =
    2^{\,\beta-1}
    =
    2^{\,1-\frac{p}{\alpha}},
  }
\end{equation}
with algebraic closures
\begin{equation}
  \alpha = \frac{p}{2-\beta},
  \qquad
  \beta = 2 - \frac{p}{\alpha}.
\end{equation}
Equivalently, for arbitrary $m,p>0$ the triple $(\alpha,\beta,\chi)$ is simply a reparameterization of $(m,p)$. The extension to the quadruple $(\alpha,\beta,\theta,\chi)$, and hence to the Snell--Murray dictionary involving the symmetric Y-angle and the Gilbert-type flux-only functional, is valid only in the parameter range $0<m\le p$ (with $0<m<p$ for a strictly concave cost $0<\beta<1$). In that range, specifying any two of $\{\alpha,\beta,\theta\}$ together with $p$ determines the EPIC index $\chi$ and all remaining quantities.
\end{theorem}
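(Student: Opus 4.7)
The plan is to treat the theorem as a pure bookkeeping exercise: the three observables $(\alpha,\beta,\theta)$ have already been expressed as functions of $(m,p)$ in Lemma~\ref{lem:flux-radius}, Lemma~\ref{lem:beta}, and the derivation leading to equation~\eqref{eq:theta-mp}, so what remains is to reparameterize $(m,p)$ in terms of a single scalar $\chi$ and to verify that the various dictionary entries are all the same identity in different disguises. I would begin by collecting the three input relations
\[
  \alpha = \tfrac{m+p}{2}, \qquad \beta = \tfrac{2m}{m+p}, \qquad \cos\tfrac{\theta}{2} = 2^{(m-p)/(m+p)},
\]
and then observing that the definition $\chi = m/(m+p)$ makes $\beta = 2\chi$ literal, which immediately disposes of the middle slot in the boxed identity.

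Next I would solve the pair $\{m+p = 2\alpha,\; m = \chi(m+p)\}$ to obtain $m = 2\alpha\chi$ and $p = 2\alpha(1-\chi)$. Reading off $p/(2\alpha) = 1-\chi$ gives the first representation $\chi = 1 - p/(2\alpha)$, and rewriting the angle exponent as $(m-p)/(m+p) = 2\chi - 1 = \beta - 1$ turns the angle formula into $\cos(\theta/2) = 2^{\beta-1}$; the equivalent form $\cos(\theta/2) = 2^{1-p/\alpha}$ then follows from $1 - p/\alpha = 1 - 2p/(m+p) = (m-p)/(m+p)$. Taking logarithms of the angle identity produces the third representation $\chi = \tfrac{1}{2}(1 + \log_2 \cos(\theta/2))$. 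The algebraic closures $\alpha = p/(2-\beta)$ and $\beta = 2 - p/\alpha$ are two rearrangements of the single identity $2 - \beta = 2p/(m+p) = p/\alpha$.

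Finally I would address the validity ranges, since the theorem distinguishes between the purely algebraic identities (valid for all $m,p>0$) and those involving the symmetric Y-angle. Requiring $\cos(\theta/2) \le 1$ in $2^{(m-p)/(m+p)}$ is equivalent to $m \le p$, and strict concavity $\beta < 1$ to $m < p$; these are exactly the restrictions already flagged below Lemma~\ref{lem:beta} and equation~\eqref{eq:theta-mp}. The invertibility of each dictionary map, for instance that $(\beta,p)$ recovers $\alpha$ or that $(\theta,p)$ recovers the triple $(\alpha,\beta,\chi)$, is then immediate from the monomial structure of the relations. Honestly, there is no substantive obstacle to the proof; the content of the theorem is not a new calculation but the recognition that, once the transport exponent $p$ is fixed, the two-dimensional family $(m,p)$ collapses onto a one-dimensional curve traced out by $\chi$, along which $\alpha$, $\beta$, and $\theta$ are monotone functions of one another.
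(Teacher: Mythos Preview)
Your proposal is correct and follows essentially the same route as the paper: the derivation preceding the theorem collects $\alpha=(m+p)/2$, $\beta=2m/(m+p)$, and $\cos(\theta/2)=2^{(m-p)/(m+p)}$ from the earlier lemmas and angle calculation, introduces $\chi=m/(m+p)$, and then performs exactly the reparameterizations you outline (including $m=2\alpha\chi$, $p=2\alpha(1-\chi)$, $(m-p)/(m+p)=2\chi-1$, and the logarithmic inversion for $\theta$). Your treatment of the validity ranges $0<m\le p$ and $0<m<p$ also matches the paper's handling.
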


In particular, the angular law
\[
  \cos\frac{\theta}{2} = 2^{\beta-1}
\]
is exactly the Snell-type condition for symmetric Y-junctions in Gilbert networks with cost functional $\sum_e \ell_e |Q_e|^\beta$ and node balance $\sum_e |Q_e|^\beta \hat{\bm t}_e = \bm 0$ \cite{Gilbert1967,XiaBranchedTransport,BernotCasellesMorel}. The novelty in the present setting is that, once the ledger is restricted to the two-term scale-free form~\eqref{eq:ledger-def}, the parameter $\beta$ ceases to be an independent phenomenological input: it is fixed by $(m,p)$, or equivalently by $\chi$, and is algebraically tied to the flux--radius law and the junction angle. Table~\ref{tab:dictionary} summarizes these relations in the concave Gilbert regime $0<m<p$.
\begin{table}[t]
  \caption{Algebraic relations between ledger exponents $(m,p)$ and network observables $(\alpha,\beta,\theta,\chi)$ in the concave Gilbert regime $0<m<p$. Here $\alpha$ is the flux--radius exponent, $\beta$ the concavity exponent of the effective flux-only cost, $\theta$ the symmetric Y-junction opening angle, and $\chi$ the single index introduced in
  Sec.~\ref{sec:chi-dictionary}.}
  \label{tab:dictionary}
  \centering

  % make the table a bit smaller and tighter
  \scriptsize
  \setlength{\tabcolsep}{2pt}

  \begin{tabular}{llll}
    \toprule
    Quantity & Definition & Range & Interpretation \\
    \midrule
    $(m,p)$
      & given structural and transport exponents
      & $0<m<p$
      & powers of $r$ in the ledger \\
    $\alpha$
      & $\alpha = \dfrac{m+p}{2}$
      & $>0$
      & flux--radius law $|Q|\propto r^\alpha$; Murray closure \\
    $\beta$
      & $\beta = \dfrac{2m}{m+p} = 2\chi$
      & $0<\beta<1$
      & concavity exponent of $\sum_e \ell_e |Q_e|^\beta$ \\
    $\chi$
      & $\chi = \dfrac{m}{m+p} = 1-\dfrac{p}{2\alpha} = \dfrac{\beta}{2}$
      & $0<\chi<\tfrac12$
      & fraction of the ledger spent on transport at the branchwise optimum \\
    $\theta$
      & $\cos(\theta/2) = 2^{\beta-1} = 2^{(m-p)/(m+p)}$
      & $0<\theta<120^\circ$
      & symmetric Y-junction angle; $\theta\to 120^\circ$ as $\chi\to 0$ \\
    \bottomrule
  \end{tabular}
\end{table}

\paragraph{Ledger-sector decomposition and meaning of $\chi$.}

The index $\chi$ arises naturally as a reparametrization of $(m,p)$, but it also has a direct interpretation inside the ledger itself. At the branchwise optimum, $\chi$ measures the fraction of the local ledger devoted to transport versus structure.

Consider the two-term ledger~\eqref{eq:ledger-def} on a single branch and let $r^\ast(Q)$ denote the optimal radius from Lemma~\ref{lem:flux-radius}. At that optimum, the transport and structural contributions are
\begin{align}
  \mathcal P_{\rm trans}
  &:= a\,\frac{Q^2}{(r^\ast)^p}, \\
  \mathcal P_{\rm struct}
  &:= b\,(r^\ast)^m,
\end{align}
so that the total per-length ledger is
\[
  \mathcal P_{\rm br}^\ast(Q)
  =
  \mathcal P_{\rm trans} + \mathcal P_{\rm struct}.
\]
Using the optimality condition from Lemma~\ref{lem:flux-radius},
\[
  Q^2 = \frac{m b}{p a}\,(r^\ast)^{m+p},
\]
the transport term can be rewritten as
\begin{equation}
  \mathcal P_{\rm trans}
  =
  a\,\frac{Q^2}{(r^\ast)^p}
  =
  a\,\frac{m b}{p a}\,(r^\ast)^m
  =
  \frac{m b}{p}\,(r^\ast)^m,
\end{equation}
while the structural term remains
\begin{equation}
  \mathcal P_{\rm struct}
  =
  b\,(r^\ast)^m.
\end{equation}
The total optimal ledger is therefore
\begin{equation}
  \mathcal P_{\rm br}^\ast(Q)
  =
  \left(\frac{m b}{p} + b\right)(r^\ast)^m
  =
  b\,\frac{m+p}{p}\,(r^\ast)^m.
\end{equation}
The fractions of the ledger allocated to transport and structure at the optimum are
\begin{align}
  \frac{\mathcal P_{\rm trans}}{\mathcal P_{\rm br}^\ast}
  &=
  \frac{(m b/p)(r^\ast)^m}{b[(m+p)/p](r^\ast)^m}
   =
   \frac{m}{m+p}, \\
  \frac{\mathcal P_{\rm struct}}{\mathcal P_{\rm br}^\ast}
  &=
  \frac{b(r^\ast)^m}{b[(m+p)/p](r^\ast)^m}
   =
   \frac{p}{m+p}.
\end{align}
In terms of the EPIC index this becomes
\begin{equation}
  \frac{\mathcal P_{\rm trans}}{\mathcal P_{\rm br}^\ast}
  =
  \chi,
  \qquad
  \frac{\mathcal P_{\rm struct}}{\mathcal P_{\rm br}^\ast}
  =
  1-\chi.
  \label{eq:chi-fraction}
\end{equation}

\begin{lemma}[Ledger-sector decomposition]
\label{lem:chi-fraction}
For the two-term ledger~\eqref{eq:ledger-def}, at the branchwise optimal radius $r^\ast(Q)$ the EPIC index $\chi=m/(m+p)$ equals the fraction of the branch ledger spent on the transport term:
\begin{equation}
  \chi
  =
  \frac{\mathcal P_{\rm trans}}{\mathcal P_{\rm br}^\ast},
  \qquad
  1-\chi
  =
  \frac{\mathcal P_{\rm struct}}{\mathcal P_{\rm br}^\ast}.
\end{equation}
\end{lemma}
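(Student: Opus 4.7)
The plan is to obtain the two sector fractions by rewriting the branchwise first-order condition of Lemma~\ref{lem:flux-radius} as an exact proportionality between $\mathcal P_{\rm trans}$ and $\mathcal P_{\rm struct}$, and then simply form ratios. Since the lemma is an algebraic identity at the optimum and not a claim about new geometry, the work consists of one Euler-type manipulation followed by two divisions.

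First I would translate the first-order condition $\partial_r \mathcal P_{\rm br}(Q,r^\ast)=0$ into a statement directly about the two monomial contributions rather than about $r^\ast(Q)$. Differentiating $\mathcal P_{\rm br}(Q,r)=aQ^2 r^{-p}+br^m$ term by term gives $-p\,\mathcal P_{\rm trans}/r + m\,\mathcal P_{\rm struct}/r = 0$ evaluated at $r=r^\ast(Q)$, hence the equipartition-type identity
\[
  p\,\mathcal P_{\rm trans} \;=\; m\,\mathcal P_{\rm struct}
  \qquad\text{at }r=r^\ast(Q).
\]
This is the same content as the relation $Q^2=(mb/pa)(r^\ast)^{m+p}$ from Lemma~\ref{lem:flux-radius}, just repackaged at the level of ledger sectors; it is the only nontrivial input.

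Second, I would use this identity to express the total optimal ledger as $\mathcal P_{\rm br}^\ast(Q)=\mathcal P_{\rm trans}+\mathcal P_{\rm struct}=(1+m/p)\,\mathcal P_{\rm struct}=\bigl((m+p)/p\bigr)\mathcal P_{\rm struct}$, and dually $\mathcal P_{\rm br}^\ast=\bigl((m+p)/m\bigr)\mathcal P_{\rm trans}$. Dividing yields $\mathcal P_{\rm struct}/\mathcal P_{\rm br}^\ast=p/(m+p)$ and $\mathcal P_{\rm trans}/\mathcal P_{\rm br}^\ast=m/(m+p)$, which are exactly $1-\chi$ and $\chi$ by the definition of $\chi$ in Sec.~\ref{sec:chi-dictionary}. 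It is worth noting that these ratios are $Q$-independent: both sectors scale at the optimum as $(r^\ast)^m\propto |Q|^{2m/(m+p)}=|Q|^\beta$, so their ratio cancels all $Q$-dependence and leaves a pure function of $(m,p)$.

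There is no real obstacle here; the lemma is in essence Euler's identity for the homogeneity of the two monomial sectors, applied at the critical point of $\mathcal P_{\rm br}(Q,\cdot)$, and then translated via $\chi=m/(m+p)$. The content worth flagging in the write-up is interpretive rather than technical: although $\chi$ was introduced in Theorem~\ref{thm:chi-dictionary} as an abstract reparametrization of the exponent pair $(m,p)$ that organizes $\alpha$, $\beta$, and $\theta$, the very same number reappears as a \emph{budget share} of the ledger because the branchwise first-order condition locks the two homogeneous sectors into a fixed $m{:}p$ ratio independent of $Q$. This is what justifies reading $\chi$ as ``the fraction of the ledger spent on transport at the branchwise optimum,'' as anticipated in the introduction.
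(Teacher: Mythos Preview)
Your proposal is correct and follows essentially the same route as the paper: both use the branchwise first-order condition from Lemma~\ref{lem:flux-radius} to express $\mathcal P_{\rm trans}$ as $(m/p)\,\mathcal P_{\rm struct}$ at $r=r^\ast(Q)$ and then form the two ratios. The only cosmetic difference is that the paper substitutes the solved form $Q^2=(mb/pa)(r^\ast)^{m+p}$ to rewrite $\mathcal P_{\rm trans}=(mb/p)(r^\ast)^m$ explicitly, whereas you read off the proportionality $p\,\mathcal P_{\rm trans}=m\,\mathcal P_{\rm struct}$ directly from $\partial_r\mathcal P_{\rm br}=0$; the content is identical.
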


From a mathematical standpoint, Lemma~\ref{lem:chi-fraction} is a direct consequence of homogeneity: the ledger is the sum of two homogeneous monomials in $r$, and at the minimum their relative weights are fixed by their degrees. The index $\chi$ depends only on the exponents $m$ and $p$: multiplying the entire ledger by a positive constant, or rescaling $Q$ and $r$ by fixed factors (which only changes the prefactors $a,b$), leaves both $\chi$ and the ratio $\mathcal P_{\rm trans}/\mathcal P_{\rm br}^\ast$ unchanged. In that sense $\chi$ is insensitive to the overall choice of units or to the normalization of the ledger. For the interpretation of $\chi$ as a \emph{fraction of a budget}, however, one must first express the transport and structural terms in commensurate units (in the EPIC setting, both are written in J/bit per unit length, so that $T_{\rm env}\dot S_{\rm prod}$ and $\varepsilon_b\dot I_{\rm struct}$ can be added as a single scalar); otherwise $\chi$ still encodes the relative exponents, but the literal numerical split between ``transport'' and ``structure'' depends on the calibration.

In physical terms, the limiting cases are instructive:
\begin{itemize}[nosep]
  \item $\chi\to 0$: the structural sector dominates, $\mathcal P_{\rm struct}\gg \mathcal P_{\rm trans}$ at the optimum.
  \item $\chi=\tfrac{1}{2}$: transport and structure share the ledger equally, $\mathcal P_{\rm trans}=\mathcal P_{\rm struct}$.
  \item $\chi\to 1$: the transport sector dominates, $\mathcal P_{\rm trans}\gg \mathcal P_{\rm struct}$.
\end{itemize}
Thus, at the branchwise optimum, $\chi$ is a dimensionless fraction between $0$ and $1$ that quantifies how the local ledger is split between transport and structure, independently of the overall choice of units or normalization of $\mathcal P$.

Combining Theorem~\ref{thm:chi-dictionary} with Lemma~\ref{lem:chi-fraction} yields a compact summary:
\begin{equation}
  \chi
  =
  \underbrace{\frac{\mathcal P_{\rm trans}}{\mathcal P_{\rm br}^\ast}}_{\text{local ledger split}}
  =
  \underbrace{\frac{\beta}{2}}_{\text{network concavity}}
  =
  \underbrace{1-\frac{p}{2\alpha}}_{\text{flux--radius scaling}}
  =
  \underbrace{\frac{1}{2}\Big(1+\log_2\cos\tfrac{\theta}{2}\Big)}_{\text{junction angle}}.
\end{equation}
Within the homogeneous two-term ledger class, this single scalar $\chi$ therefore encodes simultaneously:
\begin{enumerate}[label=(\roman*),nosep]
  \item the split between transport and structural contributions in the local ledger at the branchwise optimum,
  \item the concavity exponent of the effective flux-only cost on the network,
  \item the flux--radius scaling and generalized Murray closure at junctions,
  \item and the symmetric opening angle at Y-junctions.
\end{enumerate}
Local budgeting, network concavity, flux scaling, and junction geometry are thus not independent features of a branched network in this class; they are different manifestations of the same underlying EPIC index~$\chi$.

\begin{remark}
Theorem~\ref{thm:chi-dictionary} provides an internal algebraic dictionary for the two-term, scale-free ledger class, not a universality statement in the renormalization-group sense. Once the ledger has the form~\eqref{eq:ledger-def}, the exponents $(\alpha,\beta,\theta)$, the ledger exponents $(m,p)$, and the index $\chi$ are merely different coordinate systems on the same two-dimensional parameter space. The map
\[
  (m,p)\mapsto (\alpha,\beta)
\]
is one-to-one on $\{m>0,p>0,m\neq p\}$ with inverse
\[
  m = \alpha\beta,
  \qquad
  p = \alpha(2-\beta),
\]
so no new degrees of freedom are introduced by trading $(m,p)$ for $(\alpha,\beta)$. In logarithmic coordinates $M:=\log m$, $P:=\log p$, the EPIC index can be written as
\[
  \chi = \frac{1}{1+e^{P-M}},
\]
so that level sets of $\chi$ are straight rays through the origin in the $(M,P)$ plane; in this sense $\chi$ plays the role of a polar angle in log-space. The interest of the dictionary lies in showing which combinations of observables must coincide whenever a branched network has flowed, under coarse-graining, to such a ledger in a given scale window.

For later use it is convenient to summarise the parameter regimes. For $0<m<p$ one has $0<\beta<1$ and $0<\chi<\tfrac12$, so the flux-only cost \eqref{eq:flux-only-Gilbert} is strictly concave and the symmetric Y-angle \eqref{eq:theta-mp} is real and nondegenerate ($0<\theta<120^\circ$). The boundary case $m=p$ gives $\beta=1$, $\chi=\tfrac12$, and $\cos(\theta/2)=1$: the effective cost is linear in $|Q|$, there is no strict energetic incentive for trunk sharing, and the symmetric Y-junction degenerates to colinear branches. For $m>p$ one has $\beta>1$ and the right-hand side of \eqref{eq:theta-mp} exceeds $1$, signaling that the simple symmetric Y-geometry is incompatible with the homogeneous ledger. Accordingly, geometric interpretations of $\theta$ are restricted to $0<m\le p$, and Gilbert-type concavity arguments are restricted to $0<m<p$.
\end{remark}

%========================================
\section{Homogeneity II: homogeneous normal form and rigidity}
\label{sec:rigidity}
%========================================

The preceding sections started from the specific two-term ledger~\eqref{eq:ledger-def} and, by examining \emph{local} variations of the network functional, deduced flux--radius relations, flux-only concavity, Young--Herring-type angle balances, and the single-index Snell--Murray dictionary. In this ``forward'' direction, the two-monomial ledger is taken as given and the geometric and energetic consequences collected in Lemmas~\ref{lem:flux-radius}--\ref{lem:beta} and Theorem~\ref{thm:chi-dictionary} follow as necessary stationarity conditions. The present section addresses the complementary, ``reverse'' direction: within the general quadratic ledger class~\eqref{eq:quadratic-decomposition}, the central question is under what conditions simultaneous power-law behavior of the branchwise minimizer, the minimized flux-only cost, and the Young--Herring node balance forces the ledger back into the two-term form.

\subsection{Universality within the two-term ledger class}

Before turning to rigidity, it is helpful to collect the forward results of Secs.~\ref{sec:murray}--\ref{sec:chi-dictionary} in a single place. In a scale-free regime governed by the two-term ledger~\eqref{eq:ledger-def}, any configuration that is locally stationary with respect to edge radii and interior node translations necessarily satisfies a small set of linked structures:

\begin{enumerate}[label=(\roman*),nosep]
  \item \emph{Branchwise optimality.} Minimizing the ledger with respect to $r$ at fixed $Q$ yields a flux--radius power law $|Q|\propto r^\alpha$ with $\alpha=(m+p)/2$. At single-parent, two-daughter degree-3 nodes this implies the generalized Murray closure $r_0^\alpha = r_1^\alpha + r_2^\alpha$ (Lemma~\ref{lem:flux-radius} and Corollary~\ref{cor:murray}).

  \item \emph{Node translations.} Stationarity under translations of interior nodes, with $\{Q_e,r_e\}$ held at their branchwise optima, enforces a Young--Herring-type vector balance $\sum_{e\in E(v)} r_e^m \hat{\bm t}_e = \bm 0$ and fixes the symmetric Y-junction opening angle via $\cos(\theta/2)=2^{(m-p)/(m+p)}$ (Sec.~\ref{sec:angles}).

  \item \emph{Flux-only concavity.} Eliminating the radii using the branchwise optimum gives an effective flux-only edge cost $\mathcal P_{\rm br}^\ast(Q)\propto |Q|^\beta$ with $\beta=2m/(m+p)$, and an equivalent flux-only node balance $\sum_{e\in E(v)} |Q_e|^\beta \hat{\bm t}_e = \bm 0$ (Sec.~\ref{sec:concavity}).

  \item \emph{Single-index dictionary.} The exponents $\alpha$ and $\beta$, the symmetric Y-angle $\theta$, and the ledger exponents $(m,p)$ are all encoded in a single index $\chi=m/(m+p)$, obeying the Snell--Murray dictionary of Theorem~\ref{thm:chi-dictionary}.
\end{enumerate}

In other words, within the two-term ledger class the objects
\[
  (\alpha,\beta,\theta,\chi)
\]
are different faces of the same underlying homogeneous structure. The remainder of this section addresses the complementary question: within the broader quadratic ledger class~\eqref{eq:quadratic-decomposition}, under what homogeneity assumptions on the branchwise minimizer $r^\ast(Q)$ and the minimized cost $\mathcal P^\ast(Q)$ is one \emph{forced} back into this two-term form?

\subsection{Rigidity in the quadratic ledger class}
The forward direction describes what follows \emph{given} a two-term ledger. The complementary question is whether, within the broader quadratic ledger class~\eqref{eq:quadratic-decomposition}, the simultaneous appearance of power-law flux--radius relations and power-law flux-only costs forces the ledger back into that class. The following rigidity theorem addresses this question under strong homogeneity assumptions on the branchwise minimizer $r^\ast(Q)$ and the minimized cost $\mathcal P^\ast(Q)$. Once these assumptions are satisfied and the ledger is shown to be of two-term form, the radius-weighted and flux-only Young--Herring balances then follow automatically from the forward analysis.

\begin{theorem}[Rigidity in the quadratic ledger class under strong homogeneity]
\label{prop:rigidity-quadratic}
Let
\[
  \mathcal P(Q,r) = A(r)\,Q^2 + B(r),
\]
with $A,B\in C^2((0,\infty))$ strictly positive. Suppose there exists a nonempty open cone of fluxes $C\subset(0,\infty)$ such that:
\begin{enumerate}[label=(\roman*),nosep]
  \item For each $Q\in C$ there is a unique minimizer $r^\ast(Q)>0$ of $\mathcal P(Q,r)$ over $r>0$, and the map $Q\mapsto r^\ast(Q)$ is \emph{locally} homogeneous of degree $1/\alpha$ in the sense that, for each $Q\in C$, there exists $\varepsilon(Q)>0$ with
        \[
          r^\ast(\lambda Q) = \lambda^{1/\alpha} r^\ast(Q)
        \]
        for all $\lambda\in(1-\varepsilon(Q),1+\varepsilon(Q))$ such that $\lambda Q\in C$, for some exponent $\alpha>0$.
  \item The minimized per-length cost $\mathcal P^\ast(Q):=\mathcal P(Q,r^\ast(Q))$ is \emph{locally} homogeneous of degree $\beta$ in $Q$ on $C$ in the analogous sense: for each $Q\in C$ there exists $\delta(Q)>0$ with
        \[
          \mathcal P^\ast(\lambda Q) = \lambda^\beta \mathcal P^\ast(Q)
        \]
        for all $\lambda\in(1-\delta(Q),1+\delta(Q))$ such that $\lambda Q\in C$, for some exponent $0<\beta<2$.

\end{enumerate}
Let $I\subset(0,\infty)$ be a connected component of the image $r^\ast(C)\subset(0,\infty)$, and assume in addition that
\[
  A'(r)\neq 0 \quad\text{for all } r\in I,
\]
so that $-B'(r)/A'(r)$ is well defined on $I$. Then, on the interval $I$, there exist exponents $m,p>0$ and positive constants $a,b$ such that
\begin{equation}
  \label{eq:quadratic-monomial-conclusion}
  A(r) = a\,r^{-p},
  \qquad
  B(r) = b\,r^{m},
\end{equation}
and hence
\[
  \mathcal P(Q,r) = a\,Q^2 r^{-p} + b\,r^m,
\]
with
\[
  m = \alpha\beta,
  \qquad
  p = \alpha(2-\beta),
  \qquad
  m+p = 2\alpha.
\]
\end{theorem}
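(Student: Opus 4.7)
The plan is to turn each of the two local homogeneity hypotheses into a Cauchy--Euler ODE, integrate, and then feed the resulting power laws into the first-order condition and the identity that defines $\mathcal P^\ast$. The conclusion will drop out by inspection after one differentiation.

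First, I would fix a connected component of $C$ and work on the corresponding component $I$ of $r^\ast(C)$. Differentiating $r^\ast(\lambda Q) = \lambda^{1/\alpha} r^\ast(Q)$ at $\lambda = 1$ gives $Q\,(r^\ast)'(Q) = (1/\alpha)\,r^\ast(Q)$, whose unique positive solution is $r^\ast(Q) = c\,Q^{1/\alpha}$ for some $c > 0$. Parametrizing by $r \in I$, this inverts to
\[
  Q^2 = \gamma\,r^{2\alpha}, \qquad \gamma := c^{-2\alpha} > 0.
\]
The same Euler argument applied to $\mathcal P^\ast(\lambda Q) = \lambda^\beta \mathcal P^\ast(Q)$ gives $\mathcal P^\ast(Q) = K\,Q^\beta$, hence, substituting $Q = c^{-\alpha} r^\alpha$, $\mathcal P^\ast = \kappa\,r^{\alpha\beta}$ with $\kappa := K c^{-\alpha\beta} > 0$.

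Next, I would use the two identities that tie $A$ and $B$ to the minimizer. The first-order condition $\partial_r \mathcal P(Q, r^\ast(Q)) = 0$ reads, after inserting $Q^2 = \gamma r^{2\alpha}$,
\[
  A'(r)\,\gamma\,r^{2\alpha} + B'(r) = 0 \qquad (r \in I),
\]
while the defining identity $\mathcal P^\ast(Q) = A(r)Q^2 + B(r)$ at $r = r^\ast(Q)$ becomes
\[
  \gamma\,A(r)\,r^{2\alpha} + B(r) = \kappa\,r^{\alpha\beta} \qquad (r \in I).
\]
Differentiating this second identity in $r$ and using the first-order condition to cancel $B'(r)$ yields the single closed relation
\[
  2\alpha\gamma\,A(r)\,r^{2\alpha - 1} = \alpha\beta\,\kappa\,r^{\alpha\beta - 1},
\]
from which $A(r) = a\,r^{-p}$ with $p = \alpha(2-\beta)$ and $a = \beta\kappa/(2\gamma)$. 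Substituting back into the cost identity gives $B(r) = (\kappa - a\gamma)\,r^{\alpha\beta} = b\,r^m$ with $m = \alpha\beta$ and $b = \kappa - a\gamma$; strict positivity of $A$ and $B$ on $I$ forces $a, b > 0$, and $0 < \beta < 2$ together with $\alpha > 0$ forces $m, p > 0$, with $m + p = 2\alpha$ as claimed.

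The main obstacle is not any algebraic step but the regularity and connectedness bookkeeping that justifies differentiating the homogeneity hypotheses in $\lambda$. One needs $r^\ast$ to be $C^1$ on $C$, and $Q \mapsto r^\ast(Q)$ to be a homeomorphism onto its image so that parametrizing by $r \in I$ is legitimate. This is where the assumptions $A, B \in C^2$, uniqueness of the minimizer, and $A'(r) \neq 0$ on $I$ do the work: the implicit function theorem applied to the first-order condition $A'(r)Q^2 + B'(r) = 0$ provides the needed $C^1$ regularity of $r^\ast$, while the strict monotonicity implied by the local homogeneity $r^\ast(\lambda Q) = \lambda^{1/\alpha} r^\ast(Q)$ (with $\alpha > 0$) makes $r^\ast$ a monotone bijection from each connected component of $C$ onto its image. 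The conclusion is therefore per connected component of $I$, which matches the statement; if $r^\ast(C)$ has several components, the constants $a, b$ may differ across them, and additional continuity or global-scaling input would be needed to glue them into a single monomial on all of $(0,\infty)$.
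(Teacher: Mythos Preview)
Your argument is correct and follows essentially the same route as the paper: both convert the local homogeneity hypotheses into Euler-type ODEs, use the first-order condition and the definition of $\mathcal P^\ast$ to obtain two identities in $r$, differentiate the second and cancel against the first to isolate $A(r)$ as a monomial, and back-substitute for $B(r)$. The only cosmetic difference is that the paper works directly with the $C^1$ function $R(r)=-B'(r)/A'(r)$ on $I$ rather than first establishing $C^1$ regularity of $r^\ast$ via the implicit function theorem, which slightly streamlines the regularity bookkeeping you flag at the end.
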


The proof proceeds in four steps. First, branchwise optimality is encoded as a relation $Q^2=R(r)$ between flux and radius at the minimizer. Second, the homogeneity of $r^\ast(Q)$ is used to show that $R(r)$ is a monomial in $r$ on the chosen connected component $I$ of its image. Third, the minimized ledger is recast as a second combination $F(r)$ of $A$ and $B$, and homogeneity of $\mathcal P^\ast(Q)$ in $Q$ is translated into homogeneity of $F$ in $r$ on $I$. Finally, the resulting functional equations for $R$ and $F$ are solved to show that $A(r)$ and $B(r)$ must themselves be monomials on $I$, recovering the two-term ledger form.

\begin{proof}
The first-order optimality condition for minimizing $\mathcal P(Q,r)$ with respect to $r$ at fixed $Q\in C$ reads
\[
  \partial_r \mathcal P(Q,r^\ast(Q))
  =
  A'(r^\ast(Q))\,Q^2 + B'(r^\ast(Q)) = 0.
\]
By the nonvanishing assumption on $A'$ this can be solved for $Q^2$ as a function of $r$ on $I$, giving
\begin{equation}
  Q^2 = -\,\frac{B'(r)}{A'(r)} =: R(r)
  \label{eq:R-def}
\end{equation}
for all $r\in I$ for which there exists $Q\in C$ with $r=r^\ast(Q)$. In particular, $R$ is well defined and positive on $I$.

By assumption~(i), $r^\ast(Q)$ is locally homogeneous of degree $1/\alpha$ on $C$. Fix $r\in I$ and choose $Q\in C$ with $r=r^\ast(Q)$; such a $Q$ exists by the definition of $I$ as a connected component of $r^\ast(C)$. There exists $\varepsilon>0$ such that
\[
  r^\ast(\lambda Q) = \lambda^{1/\alpha} r^\ast(Q)
\]
for all $\lambda\in(1-\varepsilon,1+\varepsilon)$ with $\lambda Q\in C$. For $\lambda$ sufficiently close to $1$, the points $\lambda^{1/\alpha}r$ remain in the same connected component $I$, so $R(\lambda^{1/\alpha}r)$ is defined there. Applying~\eqref{eq:R-def} to $(\lambda Q,r^\ast(\lambda Q))$ yields, for such $\lambda$,
\[
  R\big(\lambda^{1/\alpha} r\big)
  =
  (\lambda Q)^2
  =
  \lambda^2 R(r).
\]
Define
\[
  g(\lambda) := R\big(\lambda^{1/\alpha} r\big) - \lambda^2 R(r).
\]
The function $g$ is differentiable and vanishes for all $\lambda$ in a neighborhood of $1$, so in particular $g'(1)=0$. Differentiating at $\lambda=1$ gives
\[
  0 = g'(1)
  =
  \frac{1}{\alpha} r R'(r) - 2 R(r),
\]
that is,
\begin{equation}
  r R'(r) = 2\alpha R(r)
\end{equation}
for all $r\in I$. On the connected interval $I$ the general solution of this Euler ODE is
\begin{equation}
  R(r) = c\,r^{2\alpha}
  \label{eq:R-hom}
\end{equation}
for some constant $c>0$. Combining~\eqref{eq:R-def} and~\eqref{eq:R-hom} gives
\begin{equation}
  -\,\frac{B'(r)}{A'(r)} = c\,r^{2\alpha}
  \quad\Longrightarrow\quad
  B'(r) = -c\,r^{2\alpha} A'(r)
  \label{eq:Bprime-Aprime-relation}
\end{equation}
for all $r\in I$.

Next define
\begin{equation}
  F(r) := \mathcal P^\ast(Q)
  \quad\text{with } r = r^\ast(Q).
\end{equation}
Equivalently, using~\eqref{eq:R-def},
\begin{equation}
  F(r)
  =
  A(r)\,R(r) + B(r)
  =
  c\,r^{2\alpha} A(r) + B(r).
  \label{eq:F-def}
\end{equation}
Fix again $r\in I$ and write $r=r^\ast(Q)$ for some $Q\in C$. Assumption~(ii) states that $\mathcal P^\ast(Q)$ is locally homogeneous of degree $\beta$ in $Q$ on $C$, so there exists $\delta>0$ such that
\[
  \mathcal P^\ast(\lambda Q) = \lambda^\beta \mathcal P^\ast(Q)
\]
for all $\lambda\in(1-\delta,1+\delta)$ with $\lambda Q\in C$. Using the local homogeneity of $r^\ast$ from (i), this is equivalent, for $\lambda$ sufficiently close to $1$ such that $\lambda^{1/\alpha}r\in I$, to
\[
  F\big(\lambda^{1/\alpha} r\big)
  =
  \lambda^\beta F(r).
\]
Define
\[
  h(\lambda) := F\big(\lambda^{1/\alpha} r\big) - \lambda^\beta F(r).
\]
Then $h$ is differentiable and vanishes on a neighborhood of $1$, so $h'(1)=0$. Differentiating at $\lambda=1$ yields
\[
  0 = h'(1)
  =
  \frac{1}{\alpha} r F'(r) - \beta F(r),
\]
hence
\begin{equation}
  r F'(r) = \alpha\beta\,F(r)
\end{equation}
for all $r\in I$. Writing $\gamma:=\alpha\beta$, the general solution of this Euler ODE on $I$ is
\begin{equation}
  F(r) = k\,r^\gamma
  \label{eq:F-hom}
\end{equation}
for some constant $k>0$.

Differentiating~\eqref{eq:F-def} with respect to $r$ and using~\eqref{eq:Bprime-Aprime-relation} gives
\[
  F'(r)
  =
  c\,\big(2\alpha r^{2\alpha-1} A(r) + r^{2\alpha}A'(r)\big)
  +
  B'(r)
  =
  c\,\big(2\alpha r^{2\alpha-1} A(r) + r^{2\alpha}A'(r)\big)
  -c\,r^{2\alpha}A'(r),
\]
so the $A'(r)$ terms cancel and
\begin{equation}
  F'(r) = 2\alpha\,c\,r^{2\alpha-1} A(r).
  \label{eq:Fprime-from-A}
\end{equation}
On the other hand, differentiating the homogeneous form~\eqref{eq:F-hom} yields
\begin{equation}
  F'(r) = k\,\gamma\,r^{\gamma-1}.
  \label{eq:Fprime-from-hom}
\end{equation}
Equating~\eqref{eq:Fprime-from-A} and~\eqref{eq:Fprime-from-hom} results in
\[
  2\alpha\,c\,r^{2\alpha-1} A(r)
  =
  k\,\gamma\,r^{\gamma-1},
\]
so
\begin{equation}
  A(r)
  =
  \frac{k\,\gamma}{2\alpha c}\,r^{\gamma-2\alpha}
  =
  a\,r^{\gamma-2\alpha}
\end{equation}
for some constant $a>0$. Substituting this into~\eqref{eq:F-def} and using~\eqref{eq:F-hom} yields
\[
  k\,r^\gamma
  =
  F(r)
  =
  c\,r^{2\alpha} A(r) + B(r)
  =
  c\,r^{2\alpha} a\,r^{\gamma-2\alpha} + B(r)
  =
  c\,a\,r^\gamma + B(r),
\]
so
\begin{equation}
  B(r) = (k - c a)\,r^\gamma = b\,r^\gamma
\end{equation}
for some constant $b>0$. Hence both $A$ and $B$ are monomials on $I$:
\[
  A(r) = a\,r^{\gamma-2\alpha},
  \qquad
  B(r) = b\,r^\gamma.
\]

Finally, set $m:=\gamma=\alpha\beta$ and $p:=2\alpha-\gamma=\alpha(2-\beta)$. Then
\[
  A(r) = a\,r^{-p},
  \qquad
  B(r) = b\,r^m,
\]
with $m+p=2\alpha$, which recovers the two-term ledger form~\eqref{eq:ledger-def} on $I$.
\end{proof}

\begin{remark}
Theorem~\ref{prop:rigidity-quadratic} classifies ledgers within the quadratic class~\eqref{eq:quadratic-decomposition} that exhibit exact scale-free behavior of the primary observables, under the regularity assumptions built into its statement (existence, for $Q$ in an open flux cone, of a unique interior branchwise minimizer $r^\ast(Q)$, local homogeneity of $r^\ast$ and $\mathcal P^\ast$ in $Q$, and nonvanishing $A'$ on the relevant radius interval $I$). Under these hypotheses, if $\mathcal P(Q,r)=A(r)Q^2+B(r)$ admits a branchwise minimizer $r^\ast(Q)$ and minimized cost $\mathcal P^\ast(Q)$ that are \emph{exact} homogeneous power laws in $Q$ on some open scaling cone, then $A$ and $B$ must themselves be pure monomials in $r$ on each connected component of the image of $r^\ast$, and the ledger reduces, in that regime, to the two-term EPIC form~\eqref{eq:ledger-def}. Conceptually, this is the converse of Lemma~\ref{lem:scale-free}: there the homogeneity ansatz in $(Q,r)$ forced the ledger into the two-term class, whereas here homogeneous behavior of the optimal branchwise quantities in $Q$ forces the same normal form. Many continuum growth or transport models (Stefan problems, phase-field models, landscape evolution equations, and related systems) are expected to exhibit only asymptotic power-law behavior, possibly decorated by slowly varying prefactors and curvature- or anisotropy-dependent corrections. Such corrections are excluded by the strong homogeneity and non-degeneracy assumptions of the theorem. In those settings, the two-term ledger should be viewed as a leading homogeneous normal form for a scale window where sub-leading effects are small, rather than as an exact microscopic ledger; in this precise sense the theorem shows uniqueness of the two-term form within the quadratic ledger class under the stated homogeneity, interior-minimizer, and $A'\neq 0$ hypotheses.
\end{remark}

\subsection{A concrete diffusive example}

A simple diffusive example illustrates the rigidity theorem with a fully explicit model. Linear diffusion in a straight cylinder with volume-based upkeep already fits into the hypotheses and conclusions above.

\begin{lemma}[Linear diffusion with volume upkeep as an EPIC example]
\label{lem:diffusion-example}
Fix a spatial dimension $d\ge 2$ and consider a straight cylindrical branch of length $\ell$ and radius $r>0$ with cross-sectional area $A(r)=c_d r^{d-1}$ for some geometric constant $c_d>0$. In line with Sec.~\ref{subsec:EPIC-principle}, the transport exponent $p$ is defined by the cross-section scaling $A(r)\propto r^{p}$; for a circular cylinder in $d=3$ this gives $p=2$, while thin films or slabs would have different effective $p$. Let a scalar potential $\phi$ (temperature, concentration, or chemical potential) satisfy a stationary linear diffusion equation with constant diffusivity, and let $Q$ denote the total diffusive flux through each cross-section. Let the structural upkeep cost per unit length scale with volume as
\[
  \mathcal P_{\rm struct}(r) = b_0\,r^{m},
  \qquad
  m:=d-1,
  \quad
  b_0>0.
\]

Then, as in Sec.~\ref{subsec:EPIC-principle}:
\begin{enumerate}[label=(\roman*),nosep]
  \item The diffusive entropy-production rate per unit length is
        \[
          T_{\rm env}\,\frac{\dot S_{\rm prod}}{\ell}
          =
          a_0\,\frac{Q^2}{r^{d-1}},
        \]
        with $a_0>0$ depending only on $T_{\rm env}$, the diffusion coefficient, and geometric constants. The per-length ledger
        \[
          \mathcal P(Q,r)
          :=
          T_{\rm env}\,\frac{\dot S_{\rm prod}}{\ell}
          + \mathcal P_{\rm struct}(r)
          =
          a_0\,Q^2 r^{-(d-1)} + b_0\,r^{d-1}
        \]
        is therefore of the quadratic form~\eqref{eq:quadratic-decomposition} with
        \[
          A(r) = a_0\,r^{-(d-1)},
          \qquad
          B(r) = b_0\,r^{d-1}.
        \]
  \item For each $Q\neq 0$ the branchwise minimizer $r^\ast(Q)$ exists, is unique, and scales as a power of $Q$, with
        \[
          \alpha = \frac{m+p}{2}
          = \frac{(d-1)+(d-1)}{2}
          = d-1,
          \qquad
          p = d-1.
        \]
  \item The minimized per-length cost $\mathcal P^\ast(Q):=\mathcal P(Q,r^\ast(Q))$ is homogeneous of degree
        \[
          \beta = \frac{2m}{m+p} = \frac{2(d-1)}{2(d-1)} = 1
        \]
        in $Q$, i.e.
        \[
          \mathcal P^\ast(\lambda Q) = \lambda\,\mathcal P^\ast(Q)
          \quad\text{for all }\lambda>0.
        \]
\end{enumerate}
In particular, assumptions~(i) and~(ii) of Theorem~\ref{prop:rigidity-quadratic} are satisfied on any interval $I\subset(0,\infty)$ of radii, with
\[
  m = d-1,
  \qquad
  p = d-1,
  \qquad
  \alpha = d-1,
  \qquad
  \beta = 1,
\]
and the conclusion reduces to the explicit monomial forms of $A$ and $B$ given above.
\end{lemma}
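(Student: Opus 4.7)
The plan is to handle the three claims in turn, with the substantive content lying entirely in part (i); parts (ii) and (iii) then reduce to immediate specializations of Lemma~\ref{lem:flux-radius} and Lemma~\ref{lem:beta} at the degenerate parameter values $m=p=d-1$.

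For part (i), the plan is to reproduce the linear irreversible thermodynamics calculation of Sec.~\ref{subsec:EPIC-principle} for the specific geometry of a straight slender cylinder of radius $r$ and cross-section $A(r)=c_d r^{d-1}$, with constant conductivity $\kappa>0$ and potential $\phi$ obeying $\nabla\cdot(\kappa\nabla\phi)=0$ with prescribed axial flux $Q$ and no-flux lateral boundary condition. Far from the endcaps the solution is uniform across the cross-section and linear along the axis, so the axial current density $\mathbf{J}=-\kappa\nabla\phi$ has magnitude $|\mathbf{J}|=Q/A(r)=Q/(c_d r^{d-1})$ and the gradient satisfies $|\nabla\phi|=|\mathbf{J}|/\kappa$. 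The local entropy production density $\sigma=T_{\rm env}^{-1}\mathbf{J}\cdot(-\nabla\phi)=|\mathbf{J}|^2/(T_{\rm env}\kappa)$ is then uniform across the cross-section, and integrating over the $(d-1)$-dimensional cross-section gives
\[
  T_{\rm env}\,\frac{\dot S_{\rm prod}}{\ell}
  = T_{\rm env}\,\sigma\,A(r)
  = \frac{Q^2}{\kappa\,c_d\,r^{d-1}},
\]
so with $a_0:=1/(\kappa c_d)>0$ one obtains $T_{\rm env}\dot S_{\rm prod}/\ell=a_0 Q^2 r^{-(d-1)}$. Adding the structural term yields the two-term ledger with $A(r)=a_0 r^{-(d-1)}$ and $B(r)=b_0 r^{d-1}$, both strictly positive and $C^\infty$.

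For (ii), with $m=p=d-1>0$ the ledger is an instance of the scale-free class \eqref{eq:ledger-def}, so Lemma~\ref{lem:flux-radius} applies directly: there is a unique minimizer $r^\ast(Q)>0$ for each $Q\neq 0$, and the flux--radius law reads $|Q|\propto(r^\ast)^\alpha$ with $\alpha=(m+p)/2=d-1$. Part (iii) then follows from Lemma~\ref{lem:beta}: the minimized per-length cost satisfies $\mathcal P^\ast(Q)\propto|Q|^\beta$ with $\beta=2m/(m+p)=1$, hence $\mathcal P^\ast(\lambda Q)=\lambda\,\mathcal P^\ast(Q)$ for all $\lambda>0$. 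The final sentence, that the rigidity hypotheses are satisfied on any interval $I\subset(0,\infty)$, is verified by noting that $A'(r)=-(d-1)a_0 r^{-d}\neq 0$ on $(0,\infty)$ and that the explicit monomial forms of $r^\ast$ and $\mathcal P^\ast$ just derived are globally homogeneous in $Q$, hence a fortiori locally homogeneous on any open flux cone $C\subset(0,\infty)$.

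There is no genuine obstacle in this lemma: the calculation in (i) is the textbook derivation of dissipation for linear transport in a uniform pipe, identifying the transport exponent $p$ via $A(r)\propto r^p$ as in Sec.~\ref{subsec:EPIC-principle}, and (ii)--(iii) are one-line specializations of previously proven results. The only care required is bookkeeping of the geometric constant $c_d$ so that the identifications $p=d-1$ and $m=d-1$ are unambiguous, and a remark that this example sits precisely at the boundary case $m=p$, $\beta=1$, $\chi=\tfrac12$, where the effective flux-only cost degenerates from strictly concave Gilbert behavior to a linear transport functional and the symmetric Y-junction angle of Sec.~\ref{sec:angles} collapses; the example is thus an extremal witness for the rigidity theorem rather than a representative of its concave interior.
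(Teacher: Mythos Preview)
Your proposal is correct and matches the paper's own treatment: the lemma is stated there without a formal proof, relying instead on the phrase ``as in Sec.~\ref{subsec:EPIC-principle}'' for part~(i) and on the already-proved Lemmas~\ref{lem:flux-radius} and~\ref{lem:beta} for parts~(ii)--(iii), which is exactly the route you take. Your explicit identification $a_0=1/(\kappa c_d)$ and the check that $A'(r)\neq 0$ on $(0,\infty)$ simply make explicit what the paper leaves implicit.
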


This diffusive example sits exactly on the boundary $\beta=1$ between strictly concave and linear flux-only costs. In this case the EPIC ledger still belongs to the two-term class, but the effective cost is merely additive rather than strictly subadditive in $Q$: combining fluxes neither lowers nor raises the cost to leading order. As a result there is no energetic preference for trunk sharing, and the symmetric Y-junction angle degenerates to $\theta=0$ in the angle law~\eqref{eq:theta-mp}. More complex PDE-based models that incorporate anisotropy, curvature effects, or nonlinear kinetics are expected to approach this structure only asymptotically. In such settings, within the quadratic ledger class defined above, the rigidity theorem singles out the two-term ledger as the unique quadratic fixed point compatible with exact scale-free behavior of the primary network observables. It is therefore best viewed as a borderline test case for Theorem~\ref{prop:rigidity-quadratic}, rather than as a representative point in the strictly concave, trunk-sharing EPIC regime discussed elsewhere. More generally, Theorem~\ref{prop:rigidity-quadratic} shows uniqueness of the two-term form within the class of quadratic ledgers whose optimal responses $r^\ast(Q)$ and $\mathcal P^\ast(Q)$ are exact power laws in $Q$ on an open scaling cone, with interior minimizers and nonvanishing $A'$ on the image of $r^\ast$.

%========================================
\section{Examples and diagnostics}
\label{sec:applications}
%========================================

The abstract EPIC ledger~\eqref{eq:ledger-def} and index $\chi$ provide a compact parametrization of a broad class of scale-free branched networks. This section illustrates that structure in three concrete settings and shows how existing measurements constrain the effective index $\chi$ and suggest specific tests. In each case, the transport exponent $p$ is fixed by elementary physics, the structural exponent $m$ encodes an upkeep scaling, and the associated quadruple $(\alpha,\beta,\theta,\chi)$ follows from the dictionary of Secs.~\ref{sec:murray}--\ref{sec:chi-dictionary}.

For convenience,
\begin{equation}
  \alpha = \frac{m+p}{2},
  \qquad
  \beta = \frac{2m}{m+p},
  \qquad
  \chi = \frac{m}{m+p},
  \qquad
  \cos\frac{\theta}{2} = 2^{\beta-1},
  \label{eq:applications-dictionary}
\end{equation}
so that once any two of $\{\alpha,\beta,\theta\}$ and $p$ are known, the others are fixed.

%----------------------------------------------------------
\subsection{Poiseuille-type trees: vascular and microfluidic}
\label{subsec:poiseuille-toy}
%----------------------------------------------------------

In laminar Poiseuille flow through a cylindrical branch of radius $r$ and length $\ell$,
\[
  Q = \frac{\pi r^{4}}{8\mu \ell}\,\Delta P
\]
implies that, at fixed $Q$, the viscous dissipation per unit length scales as $Q^{2} r^{-4}$ \cite{BruusMicrofluidics,KirbyMicroNano}. The transport sector of the ledger therefore has exponent $p=4$. A minimal structural model takes the upkeep power per unit length to scale either with cross-sectional area (volume) or with perimeter (wall area):
\begin{itemize}[nosep]
  \item \emph{Volume-priced upkeep:} $\mathcal P_{\rm struct}\propto r^{2}$, i.e. $m=2$;
  \item \emph{Surface-priced upkeep:} $\mathcal P_{\rm struct}\propto r$, i.e. $m=1$.
\end{itemize}

\paragraph{Volume-priced upkeep.}

For $(p,m)=(4,2)$, the dictionary~\eqref{eq:applications-dictionary} gives
\[
  \alpha = 3,
  \qquad
  \beta = \frac{2}{3},
  \qquad
  \chi = \frac{1}{3},
  \qquad
  \cos\frac{\theta}{2} = 2^{-1/3}
  \ \Rightarrow\
  \theta \approx 75^\circ.
\]
An ideal Poiseuille network with volume-priced upkeep is therefore characterised by
\begin{itemize}[nosep]
  \item Murray exponent $\alpha = 3$ and closure $r_0^3 = r_1^3 + r_2^3$;
  \item concave flux-only cost $\sum_e \ell_e |Q_e|^{2/3}$;
  \item symmetric Y-junction angles near $75^\circ$;
  \item EPIC index $\chi = 1/3$, so that one third of the optimal ledger is transport and two thirds structural (Sec.~\ref{sec:chi-dictionary}).
\end{itemize}
This is exactly the Poiseuille+EPIC choice used in the retinal analysis of Ref.~\cite{BennettEPIC} and is broadly compatible with Murray-type exponents and angles in microvascular data \cite{Murray1926,ZamirOptimalityTrees,PriesSecombReview}.

Reported Murray exponents for blood vessels typically lie between $2$ and $3$, depending on vessel class and species \cite{Murray1926,ZamirOptimalityTrees,PriesSecombReview}. Large elastic arteries often exhibit $\alpha$ in the $2.3$--$2.6$ range, whereas smaller arterioles and venules, where Poiseuille flow is a better approximation, can be consistent with $\alpha$ closer to~3. Planar bifurcation angles in retinal or cortical microvasculature are broadly distributed, but sibling and daughter--daughter angles commonly have means in the $75^\circ$--$90^\circ$ band, with standard deviations of order $10^\circ$--$20^\circ$.

Combining these measurements with the EPIC relations yields a diagnostic for the existence of a Poiseuille+volume ledger.
\begin{itemize}[nosep]
  \item In a \emph{small-vessel Poiseuille window} with $\alpha \approx 2.6$--$3.0$ and $\theta \approx 80^\circ$--$90^\circ$, the dictionary produces two estimates of the concavity exponent,
        \[
          \beta^{(\alpha)} = 2-\frac{4}{\alpha},
          \qquad
          \beta^{(\theta)} = 1+\log_2\cos\frac{\theta}{2}.
        \]
    For example, $\alpha = 2.7$ and $\theta = 82^\circ$ yield $\beta^{(\alpha)}\approx 0.52$ and $\beta^{(\theta)}\approx 0.59$; the difference $|\beta^{(\alpha)}-\beta^{(\theta)}|\approx 0.08$ is comparable to the experimental scatter in $\alpha$ and $\theta$. Within uncertainties, a single local index $\chi \simeq \beta/2 \approx 0.28$ can plausibly describe such windows.
  \item In contrast, applying one Poiseuille+volume ledger to an \emph{entire vascular tree}, with a representative large-vessel exponent $\alpha\approx 2.4$ and $\theta\approx 90^\circ$, gives $\beta^{(\alpha)}\approx 1/3$ and $\beta^{(\theta)}\approx 1/2$, a discrepancy of order $0.17$ in $\beta$. Large vessels, with pulse dynamics, wall elasticity, curvature, and non-Newtonian effects, therefore lie outside any single quadratic, scale-free Poiseuille ledger.
\end{itemize}

\noindent
\textbf{Implication.}
The Poiseuille EPIC ledger is best interpreted as a \emph{local} effective description for microvascular or microfluidic tiles that are
\begin{itemize}[nosep]
  \item slender and approximately straight,
  \item laminar and approximately Poiseuille,
  \item and roughly scale-free over a few generations.
\end{itemize}
In such windows, existing geometric data are consistent with a single EPIC index $\chi$ within experimental uncertainty. Systematic surveys of $\alpha$ and $\theta$ versus scale in retinal or cortical trees would provide a direct test: in any scale window where a single ledger is appropriate, $\chi$ inferred independently from radii, angles, and, where accessible, energetic concavity should agree within errors.

\medskip\noindent
\textbf{Local energetic budget in a microvascular Y-junction.}

For a representative small-vessel window,
\[
  \alpha_{\rm rep} \approx 2.7,
  \qquad
  \theta_{\rm rep} \approx 82^\circ,
\]
the dictionary with $p=4$ gives
\[
  \chi^{(\alpha)} = 1 - \frac{p}{2\alpha_{\rm rep}}
  \approx 0.26,
  \qquad
  \chi^{(\theta)} = \frac{1}{2}\Bigl(1+\log_2\cos\tfrac{\theta_{\rm rep}}{2}\Bigr)
  \approx 0.30.
\]
These estimates differ by $\sim 0.04$, so an effective local index
\[
  \chi_{\rm micro} \approx 0.28\pm 0.04,
  \qquad
  \beta_{\rm micro} = 2\chi_{\rm micro} \approx 0.56\pm 0.08
\]
is consistent with the observed scatter.

Lemma~\ref{lem:chi-fraction} shows that the same index $\chi$ controls the ledger-sector split at the branchwise optimum,
\[
  \frac{\mathcal P_{\rm trans}}{\mathcal P_{\rm br}^\ast} = \chi,
  \qquad
  \frac{\mathcal P_{\rm struct}}{\mathcal P_{\rm br}^\ast} = 1-\chi.
\]
For a microvascular Y-tile in the window above, this implies
\begin{equation}
  \label{eq:EPIC-budget-prediction}
  \frac{\mathcal P_{\rm trans}}{\mathcal P_{\rm br}^\ast}
    \approx 0.25\text{--}0.30,
  \qquad
  \frac{\mathcal P_{\rm struct}}{\mathcal P_{\rm br}^\ast}
    \approx 0.70\text{--}0.75,
\end{equation}
that is, roughly $25$–$30\%$ of the local maintenance ledger is spent on viscous pumping and $70$–$75\%$ on structural upkeep. The effective flux-only cost scales as
\[
  \mathcal P_{\rm br}^\ast(Q) \propto |Q|^{\beta_{\rm micro}},
  \qquad
  \beta_{\rm micro} \approx 0.6.
\]

Equation~\eqref{eq:EPIC-budget-prediction} summarizes a geometry-based prediction for a quantity that is usually estimated indirectly: the local split between pumping and structural expenditure. Classical Murray-type arguments fix $\alpha$ by trading dissipation against volume or surface, but do not predict a concavity exponent for a flux-only functional or a numerical pumping fraction. Gilbert/OCN-style theories postulate a concavity exponent $\beta$ in a cost $\sum_e \ell_e |Q_e|^\beta$, but treat $\beta$ as an independent parameter. The EPIC framework links the observables $(\alpha,\theta,\beta)$ and the budget split through the single index $\chi$.

A direct test is conceptually straightforward in organ-on-chip systems or in high-resolution microvascular imaging:
\begin{itemize}[nosep]
  \item image a microvascular bed or microfluidic tree in a regime where $\alpha$ and
        $\theta$ can be robustly estimated;
  \item infer $\chi$ and $\beta$ from geometry via~\eqref{eq:applications-dictionary};
  \item independently measure how total pumping power or pressure drop scales with throughput
        (to estimate $\beta$), and/or estimate
        \[
          \frac{\text{viscous dissipation per unit length}}
               {\text{total per-length maintenance cost}};
        \]
  \item compare these measurements with the prediction in~\eqref{eq:EPIC-budget-prediction}.
\end{itemize}

\paragraph{Surface-priced upkeep.}

For completeness, the surface-priced case $(p,m)=(4,1)$ gives
\[
  \alpha = \frac{5}{2},
  \qquad
  \beta = \frac{2}{5},
  \qquad
  \chi = \frac{1}{5},
  \qquad
  \cos\frac{\theta}{2} = 2^{-3/5}
  \ \Rightarrow\
  \theta\approx 97^\circ.
\]
Here the ledger penalizes large radii more strongly, the flux-only cost is less concave, branches open out, and only about one fifth of the optimal ledger is transport. Microfluidic networks etched in a uniform substrate, where footprint or wall area dominates cost, are natural candidates to probe this regime; EPIC predicts systematically larger opening angles and weaker trunk sharing than in volume-priced vascular trees.

%----------------------------------------------------------
\subsection{Diffusive and Ohmic conduction trees}
\label{subsec:conduction-toy}
%----------------------------------------------------------

For diffusive or Ohmic transport in three dimensions along a straight cylindrical branch, the cross-sectional area scales as $A(r)\propto r^{2}$ and the entropy production per unit length scales as $Q^2/A(r)\propto Q^2/r^2$, so the transport exponent is $p=2$ \cite{DeGrootMazur,BruusMicrofluidics}. As before, a simple structural model takes the upkeep per unit length to scale with either volume ($m=2$) or surface ($m=1$).

For $(p,m)=(2,2)$,
\[
  \alpha = 2,
  \qquad
  \beta = 1,
  \qquad
  \chi = \frac{1}{2},
  \qquad
  \cos\frac{\theta}{2} = 2^{\beta-1} = 1
  \ \Rightarrow\
  \theta = 0^\circ.
\]
This is a limiting case: the flux-only cost $\propto |Q|^\beta$ is linear in $|Q|$, there is no strict energetic incentive for trunk sharing at leading order, and the symmetric Y-junction degenerates. Systems near this parameter choice are therefore expected to be highly sensitive to additional physics beyond the two-term homogeneous ledger (anisotropy, boundary conditions, nonlocal constraints), which can shape the network without a strong bias toward deep trees.

For $(p,m)=(2,1)$,
\[
  \alpha = \frac{3}{2},
  \qquad
  \beta = \frac{2}{3},
  \qquad
  \chi = \frac{1}{3},
  \qquad
  \cos\frac{\theta}{2} = 2^{\beta-1} = 2^{-1/3}
  \ \Rightarrow\
  \theta\approx 75^\circ,
\]
so the ledger behaves qualitatively like the Poiseuille+volume case, but with a weaker dependence of flux on radius ($|Q|\propto r^{3/2}$) and a slightly different balance between transport and structure. Branched thermal-conduction structures in electronics or porous media, in which wall-area cost is roughly linear in radius and diffusion dominates transport, are natural candidates for this regime.

\medskip\noindent
\textbf{Implication.}
The conduction example highlights that EPIC does not always predict strongly concave flux-only costs: in some parameter ranges it reduces to a nearly linear effective ledger, and geometry is then dominated by other constraints. When experimental exponents are close to $(\alpha,\beta)=(2,1)$ for $p=2$, the two-term ledger should be viewed as a homogeneous baseline around which system-specific corrections are organized, rather than as a sharply predictive theory of topology.

%----------------------------------------------------------
\subsection{Geophysical drainage and effective exponents}
\label{subsec:geo-toy}
%----------------------------------------------------------

Geophysical drainage networks introduce additional complexity: flows may be turbulent, channels have irregular cross sections, and erosion and sediment transport introduce long memory effects \cite{RodriguezIturbeRinaldo,RinaldoMinimumEnergy}. Nonetheless, many reduced models treat the discharge $Q$ as a scalar flux along an effective channel of width $w$ and depth $h$, with dissipation per unit length scaling as $Q^2/(w\,h^2)$ or a similar form, and structural cost scaling with cross-sectional area or perimeter \cite{RodriguezIturbeRinaldo,BejanConstructalReview,BanavarEfficientNetworks}. Empirically, classical hydraulic geometry shows that the bankfull channel width---the width at the stage where flow just fills the channel without overtopping the banks---obeys
\[
  w \propto Q^a, \qquad a \approx 0.5\pm 0.1,
\]
across many rivers \cite{LeopoldMaddock,WohlWilcox}. A central value $a\approx 0.5$ corresponds to an effective exponent $\alpha_{\rm emp}=1/a\approx 2$ for flux versus a ``radius'' proxy. Planform measurements further show that both confluences and bifurcations in many river basins have branching angles clustered around $70^\circ$--$75^\circ$, with broad but robust peaks in humid, seepage-dominated regions \cite{Seybold2017Climate}.

Interpreting large alluvial rivers as an EPIC-type diffusive network with $p\simeq 2$ leads to a sharp test. Using~\eqref{eq:applications-dictionary}:
\begin{itemize}[nosep]
  \item from $\alpha_{\rm emp}\approx 2$ and $p=2$,
        \[
          \beta^{(\alpha)} = 2-\frac{2}{\alpha_{\rm emp}} \approx 1,
        \]
        so EPIC would predict an approximately linear flux-only cost and degenerate symmetric
        Y-angles;
  \item from $\theta_{\rm emp}\approx 72^\circ$,
        \[
          \beta^{(\theta)} = 1+\log_2\cos\frac{\theta_{\rm emp}}{2} \approx 0.69,
        \]
        corresponding to a strongly concave flux-only cost.
\end{itemize}
The discrepancy $|\beta^{(\alpha)}-\beta^{(\theta)}|\approx 0.3$ is far larger than the observational uncertainty in these exponents. Large, mixed-process river networks therefore lie outside the quadratic, scale-free EPIC class with a single index $\chi$: they are governed by turbulence, threshold sediment motion, bank strength, vegetation, and climate history, all of which violate the simple homogeneity assumptions built into
\eqref{eq:ledger-def}.

The same angular data can, however, be used to define a quantitative target for regimes where EPIC might apply. Assuming $p\approx 2$ and taking $\theta\approx 72^\circ$ as the primary observable, the dictionary implies
\[
  \beta \approx 0.69,
  \qquad
  \alpha = \frac{p}{2-\beta} \approx 1.53,
  \qquad
  a = \frac{1}{\alpha} \approx 0.65.
\]
In a genuinely quadratic, scale-free drainage regime with $p\approx 2$ and robust $72^\circ$ branching, EPIC would therefore predict that
\begin{itemize}[nosep]
  \item the width--discharge exponent should be closer to $a\approx 0.64$--$0.66$ than to
        $0.5$;
  \item the concavity exponent linking energy dissipation to discharge should be
        $\beta\approx 0.7$.
\end{itemize}
These values are not typical of large alluvial rivers \cite{LeopoldMaddock,WohlWilcox}, but they are plausible targets for \emph{narrower} classes of drainage: groundwater-sapping valleys, experimental erosion tanks, small deltas, or other systems where flow is more nearly laminar or diffusive and a scale-free window may exist.

\medskip\noindent
\textbf{Implication.}
In drainage, EPIC acts more as a \emph{filter} than as a direct model: it rules out the idea that a single quadratic, scale-free ledger describes all rivers, but it also supplies quantitative benchmarks (the triplet $(a,\beta,\theta)$ above) against which to test
specially selected drainage regimes. Observing those triplets in seepage-dominated networks would be strong evidence that such systems have flowed into an EPIC-like universality class; persistent discrepancies would indicate that even those regimes are controlled by additional scales or nonlinearities.

\medskip
Across these examples, the role of the EPIC index $\chi$ is twofold. First, it organizes a family of observables---flux--radius exponents, concavity, and junction angles---into a single scalar descriptor of a scale-free regime. Second, it provides falsifiable predictions: in any purported EPIC window, $\chi$ inferred independently from $(\alpha,\beta,\theta)$ must agree within uncertainties. Where this fails (as in whole-tree vasculature or large river basins), the failure is informative: it locates the boundary of the quadratic, scale-free cone and points directly to the physics that lie beyond it. Where it succeeds (as in carefully
selected Poiseuille microvascular or microfluidic tiles, or in future seepage experiments), it converts geometry into concrete energetic predictions such as Eq.~\eqref{eq:EPIC-budget-prediction} that are not fixed by more conventional optimality frameworks.

%========================================
\section{Discussion and outlook}
\label{sec:discussion}
%========================================}
The analysis in this paper begins from a deliberately minimal description of a branched network: each edge carries a scalar flux $Q$, an effective radius $r$, and a per-length ledger $\mathcal P(Q,r)$ that aggregates the costs of transport and upkeep. Two structural assumptions are imposed: linear-response quadratic dependence on $Q$ at fixed geometry, and the existence of an intermediate, approximately scale-free regime in which $\mathcal P$ is homogeneous in $(Q,r)$ on an open scaling cone. In such a regime, an Euler-type argument reduces any admissible quadratic ledger to a two-term power law
\[
  \mathcal P(Q,r) = a\,Q^2 r^{-p} + b\,r^m,
\]
and all of the geometric structure follows from minimizing this expression at fixed flux and enforcing stationarity with respect to node translations.

A first outcome is a compact set of structural relations that gather several classical results into a single framework. The branchwise flux--radius relation $|Q|\propto r^\alpha$ and generalized Murray closure $r_0^\alpha = r_1^\alpha + r_2^\alpha$ arise directly from the competition between the transport term $Q^2 r^{-p}$ and the structural term $r^m$. Stationarity of the ledger under infinitesimal motion of a junction enforces a vector balance $\sum r^m \hat{\bm t} = \bm 0$ identical in form to a Young--Herring triple-junction condition with effective line tensions $\propto r^m$; for symmetric Y-junctions this yields the angle law $\cos(\theta/2)=2^{(m-p)/(m+p)}$, recovering the Steiner $120^\circ$ limit as $m\to 0$ and predicting a degenerating angle as $m\to p$. Eliminating the radii at the branchwise optimum produces a flux-only cost $\mathcal P^\ast(Q)\propto |Q|^\beta$ with $\beta=2m/(m+p)$; for $0<m<p$ this cost is strictly concave, and the corresponding node balance $\sum |Q|^\beta \hat{\bm t}=\bm 0$ reproduces the standard Gilbert/branched-transport angle laws. Within the quadratic, scale-free ledger class, Murray-type scaling, Young--Herring angles, and Gilbert concavity thus appear as mutually constrained consequences of the same two-term structure rather than as independent ingredients.

A second outcome is the identification of a single dimensionless index
\[
  \chi = \frac{m}{m+p},
\]
which packages the ledger exponents into a form that is directly comparable across systems. The Snell--Murray dictionary of Sec.~\ref{sec:chi-dictionary} shows that, once the transport exponent $p$ is known, $\chi$ can be inferred from any two among the flux--radius exponent $\alpha$, the concavity exponent $\beta$, and the symmetric opening angle $\theta$, via
\[
  \chi
  =
  1 - \frac{p}{2\alpha}
  =
  \frac{\beta}{2}
  =
  \frac{1}{2}\Bigl(1+\log_2\cos\tfrac{\theta}{2}\Bigr).
\]
In the abstract ledger, $\chi$ measures how the optimal per-length ledger is split between the transport and structural sectors. In the EPIC interpretation, it is exactly the fraction of the local entropy-per-information budget spent on driving flux rather than maintaining structure. In this sense, $\chi$ acts as a coarse-grained, thermodynamic order parameter for branched networks operating in a quadratic, scale-free regime.

Perhaps the most structurally informative result is the rigidity theorem of Sec.~\ref{sec:rigidity}. Within the general quadratic class $\mathcal P(Q,r)=A(r)Q^2+B(r)$, the combination of (i) a pure power-law flux--radius relation $|Q|\propto r^\alpha$ at the branchwise minimizer and (ii) a pure power-law flux-only cost $\mathcal P^\ast(Q)\propto |Q|^\beta$ on an open flux cone forces $A(r)$ and $B(r)$ themselves to be monomials in $r$. In other words, whenever a branched system empirically exhibits power-law flux--radius scaling, power-law flux-only scaling, and power-law radius weights in a Young--Herring balance over a finite scale window, its effective quadratic ledger in that window must belong to the two-term family analyzed here. The two-term EPIC ledger is therefore not merely a convenient ansatz: under the stated homogeneity assumptions it is the unique representative of the quadratic, scale-free class compatible with simultaneous Murray-type scaling, Gilbert concavity, and Young--Herring angles.

The schematic examples in Sec.~\ref{sec:applications} illustrate how the index $\chi$ can be used as a diagnostic. In laminar Poiseuille networks, $p=4$ is fixed by the Hagen--Poiseuille law, and the combination of flux--radius data and symmetric junction angles constrains $\chi$, and hence the concavity exponent $\beta$ and the transport/structure budget split; the Poiseuille EPIC analysis of retinal trees in Ref.~\cite{BennettEPIC} corresponds to $\chi\simeq 1/3$. In diffusive or Ohmic networks, geometry fixes $p\simeq 2$, and the same dictionary applies. In coarse-grained drainage models, effective exponents inferred from bankfull width--discharge relations, power–discharge scaling, and planform junction angles map to corresponding values of $\chi$, which then serve as compact descriptors of distinct climatic or lithological regimes and, importantly, indicate where a quadratic, scale-free description breaks down.

Several theoretical directions follow naturally. One concerns the relaxation of the quadratic (linear-response) assumption in the transport sector. Many geophysical and technological flows are strongly nonlinear, with leading-order dissipation that scales as $|Q|^{1+\delta}$ at fixed geometry, with $\delta\neq 1$. Extending the homogeneity analysis of Sec.~\ref{sec:murray} to such cases would replace the exponent $2$ by a general transport degree $\gamma$ and modify the dictionary accordingly, yielding relations of the form $\alpha=(m+\tilde p)/\gamma$ and $\beta=\gamma m/(m+\tilde p)$ for an effective geometry exponent $\tilde p$. The qualitative message—that a small number of homogeneity exponents control a larger set of observable scalings and angles, and that their ratios play the role of order parameters—is expected to survive, but the explicit form of the dictionary would change.

A second extension concerns anisotropy and nonlocality. The present treatment models each branch as isotropic and the ledger as a strictly local function of $(Q,r)$. In systems with strong orientation-dependent surface energies (for example, faceted dendritic crystals) or with nonlocal control costs (for example, neural or vascular networks with global perfusion or wiring constraints), additional tensorial or nonlocal contributions inevitably appear in the ledger. One natural line of work is to generalize the Young--Herring balance \eqref{eq:bit-tension} to include orientation-dependent weights and to examine how the index $\chi$ interacts with angular anisotropy. Another is to connect the scale-free ledger picture to explicit renormalization-group treatments of branched transport and to identify conditions under which the two-term ledger appears as an infrared fixed point of more complex microscopic models.

On the empirical side, the framework suggests a concrete program of falsifiable tests. In any candidate system, at least two of $\{\alpha,\beta,\theta\}$ together with a transport exponent $p$ can, in principle, be estimated. The index $\chi$ inferred from those measurements then fixes the third observable via the Snell--Murray dictionary. In microfluidic trees, the Murray exponent and symmetric junction angles can be extracted from images, while the concavity exponent $\beta$ can be estimated by varying the number of outlets or throughput and measuring how pumping power scales. In drainage networks, width--discharge relations and confluence angles can be obtained from topography, and effective concavity exponents inferred from the scaling of energy dissipation with basin size. Agreement with the EPIC dictionary would support the presence of an underlying quadratic, scale-free ledger over the probed scale range; systematic departures would identify regimes where linear response, scale-free homogeneity, or local additivity fail.

The connection to the original EPIC formulation provides a thermodynamic interpretation of the exponents. In the J/bit framework of Ref.~\cite{BennettEPIC}, the structural exponent $m$ and the transport exponent $p$ arise from calibrating how control-layer power scales with ``bits of structure'' and how entropy production scales with flux and radius. Crucially, the present work does not introduce any additional variational principle (such as maximum entropy production) to select $(m,p)$ \cite{DewarMaxEP,MartyushevSeleznev,KleidonLorenz}; instead, $(m,p)$ enter as effective empirical descriptors of a regime that has already been coarse-grained to a quadratic, scale-free ledger. Once those exponents are measured, many aspects of network geometry are fixed, and the single index $\chi$ simultaneously quantifies a budget split in physical units and organizes a family of geometric and energetic exponents. This dual role makes $\chi$ a promising candidate for a genuine nonequilibrium descriptor of branched systems: it links microscopic balance sheets (energy per bit, effective dimensionality of transport, upkeep scaling) to macroscopic observables (flux--radius scaling, concavity, junction geometry) in a form that can be confronted directly with data.

Whether this single scalar index will remain robust across the full diversity of natural and engineered trees remains an open question. What the present analysis establishes is that, whenever a branched network can be described by an admissible quadratic, scale-free ledger on some scale window, the geometric freedom is sharply constrained: a single ratio,
\[
  \chi = \frac{m}{m+p},
\]
determines how much of the local ledger is spent on moving material versus maintaining structure, and the main features of the resulting geometry—the flux--radius law, the concavity of the effective cost, and the junction angles—follow accordingly.

\bibliography{ref}

\end{document}